\numberwithin{equation}{section}
\numberwithin{figure}{section}
\numberwithin{equation}{section}
\numberwithin{figure}{section}
  \theoremstyle{plain}
  \newtheorem{prop}{\protect\propositionname}
  \theoremstyle{plain}
  \newtheorem{assumption}{\protect\assumptionname}
  \theoremstyle{plain}
  \newtheorem{lem}{\protect\lemmaname}
  \theoremstyle{remark}
  \newtheorem{rem}{\protect\remarkname}
 \theoremstyle{definition}
\newtheorem{exmp}{Example}[section]
\newtheorem{cor}{Corollary}
\theoremstyle{plain}
\newtheorem{thm}{\protect\theoremname}
  \providecommand{\assumptionname}{Assumption}
  \providecommand{\lemmaname}{Lemma}
  \providecommand{\propositionname}{Proposition}
  \providecommand{\remarkname}{Remark}
\providecommand{\theoremname}{Theorem}
\begin{document}

\title{Multiscale time averaging, Reloaded}

\author{Shmuel Fishman}

\address{Physics Department, Technion, Haifa 32000, Israel}

\author{Avy Soffer}

\address{Mathematics Department, Rutgers University, 110 Frelinghuysen road,
Piscataway , NJ 08854, USA}

\email{soffer@math.rutgers.edu}

\maketitle

\begin{abstract}
We develop a rigorously controlled multi-time scale averaging technique;
the averaging is done on a finite time interval, properly chosen,
and then, via iterations and normal form transformations, the time
intervals are scaled to arbitrary order. Here, we consider as an example
the problem of a finite dimensional conservative dynamical system,
which is quasiperiodic and dominated by slow frequencies, leading
to small divisor problems in perturbative schemes. Our estimates hold
for arbitrary long time intervals similar to the Nekhoroshev type
results.
\end{abstract}

\section{\label{sec:Introduction}Introduction}

The aim of this work is to initiate the multiscale time-averaging
analysis of repeated normal form finite time interval averages. That
is we study the problem of general dynamical system type, written
in the form
\begin{equation}
i\frac{\partial}{\partial t}\vec{c}=\beta A(t)\vec{c}\label{eq:1.1}
\end{equation}
 where $\beta$ is a small parameter and $\vec{c}$ is restricted
to finite dimension. We choose $A(t)$ to be a hermitian matrix, for
each time $t$, that is we restrict ourselves to conservative systems
, i.e.,
\begin{equation}
\|\vec{c}(t)\|^{2}=\sum_{i=1}^{N}\left|c_{i}(t)\right|^{2}=\left\Vert \vec{c}(0)\right\Vert ^{2}.
\end{equation}
 where $N$ is the dimension of space. The models we consider here
are directly related to the problem of solving nonlinear dynamical
systems they are modeled by various types of equations (ODE and PDE)
{[}see an example in section 4{]}. Standard perturbation expansion
results in three difficulties in general : the secular terms, the
small divisor problem and the entropy problem. The secular terms are
divergent terms . These terms are removed by an almost identity transformation
of the original variables . This approach was introduced by Poincaré,
and the transforms are called normal form transformations.

It was shown in \cite{Ziane:2000:CRG,DeVille20081029,OMalleyJr20063}
that the transformation group method of Oono \cite{PhysRevE.54.376}
can be applied to ODE's and results in a similar normal form transformation.
In \cite{Soffer1999,Soffer2004} a new approach was developed to normal
form transformations which applies equally well to PDE's. It relies
on the averaging and it is particularly suitable for the analysis
of the present paper {[}See proof of theorems 1-2{]}.

The normal form analysis does not solve the small divisor problem
; these are perturbative terms which are not infinite but with arbitrary
small denominators.

The way to deal with those terms is usually based on KAM theory \cite{Arnold1989},
where one requires the numerators to be exponentially small and by
imposing diophantine conditions on the initial data.This leads to
a polynomial lower bound on the small divisors . This approach can
not be used in general for explicitly given systems and initial data.

Another approach is the method of Nekhoroshev {[}see \cite{2009arXiv0912.3725B}
and the references therin{]} in this approach one proves estimates
to times of order $e^{-\frac{c}{\beta^{\alpha}}}$ with $\beta$ small,
and without much restrictions on the system.

Here we develop a new approach based on partial time averaging repeated
on larger and larger scales, to deal with the small divisor problem,
without resorting to diophantine conditions and to get Nekhoroshev
type results. The key observation is that when the denominator is
very small compared with the inverse time scale of averaging, it can
be approximated by zero(!), and then one removes this term by a normal
form transformation just as secular terms.

The entropy problem is encountered when approximating PDE's. In this
case the number of terms after few iterations is astronomically large{[}
See \cite{1367-2630-12-6-063035,Fishman2008a,Fishman2009a,0951-7715-25-4-R53}{]}.

Our approach extends the time averaging to arbitrary time intervals
, in contrast to standard time averaging analysis as presented in
\textbf{\cite{Sanders2010,Krol1991}.}

We construct finite time interval averages of $A(t)$
\begin{equation}
\bar{A}_{0}^{\left(n\right)}=\frac{1}{T_{0}}\int_{nT_{0}}^{(n+1)T_{0}}A(s)ds
\end{equation}
 as approximation of the dynamics on a finite time interval depending
on $T_{0}$. The equation is solved for the piecewise constant evolution
and we peel off the approximate solution from the exact one and derive
an equation for the \textquotedbl{}leftover\textquotedbl{}.\\
Then, we introduce a normal form transformation, based on the method
of \cite{Soffer1999,Soffer2004}; this is an almost identity transformation
of the \textquotedbl{}leftover\textquotedbl{}. We then show, following
(properly modified) arguments of \cite{Soffer1999,Soffer2004}, that
the resulting equation for the new quantity satisfies the same equation
as (\ref{eq:1.1}), but with the replacement
\begin{equation}
\beta\rightarrow\beta^{3/2}\label{eq:1.4}
\end{equation}

\begin{equation}
A(t)\rightarrow\tilde{A}(t),
\end{equation}
 with $\tilde{A}(t)$ containing only \textquotedbl{}low frequency\textquotedbl{}
terms. By repeating the above iteration we get, to any order, an operator
$V_{n}(t)$, which solves the original equation for the evolution
operator to order $\beta^{3n/2}$, namely, the almost constant leftover,
$\vec{c}_{M}$ satisfies,
\begin{equation}
\vec{c}(t)=V_{1}V_{2}\cdot\cdot\cdot\cdot\cdot V_{M-1}\vec{c}_{M}(t),
\end{equation}
 with
\begin{equation}
V_{n}=U_{1,n}U_{2,n}\tilde{U}_{n}^{-1},\label{eq:1.7}
\end{equation}
 where $U_{1,n}$ and $U_{2,n}$ are unitary ``peel off'' transformations
while $\tilde{U}_{n}^{-1}$ is a normal form transformation. The normal
form transformation is an almost identity transformation,
\begin{equation}
\tilde{U}_{n}=I+O\left(\beta^{3n/2}\right).
\end{equation}
 The $\vec{c}_{M}\left(t\right)$ satisfy the equation similar to
(\ref{eq:1.1})
\begin{equation}
i\frac{\partial}{\partial t}\vec{c}_{M}\left(t\right)=\beta^{3M/2}A_{M}\left(t\right)\vec{c}_{M}\left(t\right).
\end{equation}
 For ensuring that the new $A_{M}$ are hermitian after each iteration,
we need to modify the normal form structure used in \cite{Soffer1999,Soffer2004},
and apply the following : \begin{prop} \label{prop:1}Let, for each
$t\ge0$ the vector family $\vec{c}(t)\in\mathbb{C}^{N}$ satisfy
the following:
\begin{equation}
\|\vec{c}(t)\|=1\,\,\,\text{for any }\,\,\,\vec{c}\left(0\right)\in\mathbb{C}^{N}
\end{equation}

\begin{equation}
\vec{c}(t)=U(t)\vec{c}(0),
\end{equation}
therefore we have
\begin{equation}
U(t)^{\dagger}=U(t)^{-1},
\end{equation}
with $U$ linear. Furthermore assume
\begin{equation}
\left\Vert \frac{d\vec{c}}{dt}\right\Vert <\infty.
\end{equation}
\end{prop}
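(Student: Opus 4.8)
The plan is to read off from the hypotheses the generator of the flow and to check that it is hermitian; this is the content of the proposition as it is used later, namely that $\vec c(t)$ satisfies an equation of the form $i\,\partial_t\vec c=A(t)\vec c$ with $A(t)=A(t)^\dagger$. First I would record that norm preservation together with linearity of $U$ already gives unitarity: by the polarization identity, $\|U(t)\vec c(0)\|=\|\vec c(0)\|$ for all $\vec c(0)$ implies $\langle U(t)\vec x,U(t)\vec y\rangle=\langle \vec x,\vec y\rangle$ for all $\vec x,\vec y\in\mathbb{C}^{N}$, hence $U(t)^{\dagger}U(t)=I$, and in finite dimension this is equivalent to $U(t)^{\dagger}=U(t)^{-1}$. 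This is exactly the identity displayed in the statement, so nothing is needed here beyond spelling it out.

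Next I would upgrade the pointwise bound $\left\Vert d\vec c/dt\right\Vert<\infty$ to differentiability of the matrix $U(t)$: applying the hypothesis to the initial data $\vec c(0)=\vec e_{k}$, $k=1,\dots,N$, running over an orthonormal basis shows that each column of $U(t)$ is differentiable with finite derivative, so $t\mapsto U(t)$ is a differentiable matrix-valued function with $\left\Vert\dot U(t)\right\Vert<\infty$ for each $t$. I then set $A(t):=i\,\dot U(t)\,U(t)^{-1}=i\,\dot U(t)\,U(t)^{\dagger}$, so that by construction $i\dot U(t)=A(t)U(t)$ and therefore $\vec c(t)=U(t)\vec c(0)$ solves $i\,\frac{\partial}{\partial t}\vec c(t)=A(t)\vec c(t)$, an equation of the same shape as (\ref{eq:1.1}). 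Since $\left\Vert\dot U(t)\right\Vert<\infty$ and $\left\Vert U(t)^{\dagger}\right\Vert=1$, the matrix $A(t)$ has finite norm for each $t$.

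The heart of the matter is hermiticity of $A(t)$. Differentiating the unitarity relation $U(t)^{\dagger}U(t)=I$ gives $\dot U(t)^{\dagger}U(t)+U(t)^{\dagger}\dot U(t)=0$, i.e. $\dot U(t)^{\dagger}=-U(t)^{\dagger}\dot U(t)\,U(t)^{\dagger}$. Hence
\[
A(t)^{\dagger}=-i\,U(t)\,\dot U(t)^{\dagger}=i\,U(t)\,U(t)^{\dagger}\,\dot U(t)\,U(t)^{\dagger}=i\,\dot U(t)\,U(t)^{\dagger}=A(t),
\]
using $U(t)U(t)^{\dagger}=I$, so $A(t)=A(t)^{\dagger}$ for every $t\ge0$. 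This is the fact invoked in the proof of Theorems 1--2 to guarantee that after averaging and the normal form transformation the new generators $A_{M}$ are still hermitian at every step of the iteration.

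The only genuinely delicate point is the passage from the hypothesis $\left\Vert d\vec c/dt\right\Vert<\infty$ (for arbitrary initial data) to joint differentiability of $U(t)$ together with enough regularity in $t$ — continuity, or at least local integrability, of $A(t)$ — for the equation $i\,\partial_t\vec c=A(t)\vec c$ to carry its usual meaning; in finite dimension, once $U(t)$ is continuous and differentiable, this is routine, but it should be stated explicitly. Everything else reduces to the polarization argument for unitarity and the two-line computation above for hermiticity.
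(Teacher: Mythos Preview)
Your proof is correct and follows essentially the same idea as the paper's: both differentiate the unitarity condition to extract hermiticity of the generator. The only cosmetic difference is that the paper works at the vector level, differentiating the conserved inner product $\langle \vec c_{1}(t),\vec c_{2}(t)\rangle$ of two solutions to obtain $(\vec c_{1},(A-A^{\dagger})\vec c_{2})=0$, whereas you carry out the equivalent computation at the matrix level by differentiating $U^{\dagger}U=I$; your version is slightly more explicit in first constructing $A(t)=i\dot U(t)U(t)^{-1}$.
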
 Then,
\begin{equation}
i\frac{dU(t)}{dt}=A(t)U(t),
\end{equation}
 with
\begin{equation}
A^{\dagger}=A.
\end{equation}

\begin{proof} From
\begin{equation}
i\frac{d\vec{c}}{dt}=i\frac{dU}{dt}\vec{c}\left(0\right)=A(t)\vec{c}(t),\label{eq:1.15}
\end{equation}
 and the conjugate equation, we get that for arbitrary $c_{1}$ and
$c_{2}$ satisfying (\ref{eq:1.15}), the following identity
\begin{equation}
0=i\frac{d}{dt}\vec{c_{1}}^{*}\cdot\vec{c}_{2}=\vec{c}_{1}\cdot A(t)\vec{c}_{2}-A(t)\vec{c}_{1}\cdot\vec{c}_{2}=\left(\vec{c}_{1},A\vec{c}_{2}\right)-\left(A\vec{c}_{1},\vec{c}_{2}\right)=\left(\vec{c}_{1},\left(A-A^{\dagger}\right)\vec{c}_{2}\right).
\end{equation}
 \end{proof}

\begin{rem} The generalization to an infinite dimensional space requires
a different approach. \end{rem}

The method of the present work involves an infinite hierarchy of averaging
approximations resulting in the effective reduction of $\beta$ via
(\ref{eq:1.4}), allowing the validity of the approximation over increasingly
longer time-scales, $T_{0}$. This is a substantial improvement compared
to known methods where averaging in one stage is used as described
in \cite{Sanders2010} (see p.379, App. E 2.2 and p. 390) and \cite{Krol1991}.
The present work is an important step in this direction proposed in
\cite{Sanders2010}. Finally, in sec 4, we demonstrate how the method
works for the case of almost periodic matrix $A(t)$:
\begin{equation}
A(t)=\sum_{j=1}^{\infty}A_{j}e^{i\omega_{j}t}+h.c.
\end{equation}
 with
\begin{equation}
\|A_{j}\|\leq c\left\langle j\right\rangle ^{-\sigma}\quad\text{ , for some, }\quad\sigma>1
\end{equation}
 and the interesting case $\omega_{j}\rightarrow0,$ as $j\rightarrow\infty$.
Here h.c stands for hermitian conjugate. Notice that no assumptions
are made on the $\omega_{j}$, as they approach zero thus removing
the small divisor problem. Here $A_{j}$ are constant $N\times N$
matrices. The interest in these kind of systems stems from the fact
that many dynamical systems can be modeled and/or approximated by
such equations. A related, but different example is the nonlinear
system derived from the Nonlinear Schrödinger Equation (NLSE) with
a random potential term \cite{Fishman2008a,0951-7715-22-12-004}.
It leads naturally to a system with
\begin{equation}
A_{n,m}\left(t\right)=\sum_{i,j}V_{n}^{mjk}e^{i\omega_{mnjk}t}+h.c,
\end{equation}
 where the regime
\begin{equation}
0<\beta\ll1,
\end{equation}
 is of great interest \cite{0951-7715-25-4-R53}. More generally,
Hamiltonian dynamical systems with Hamiltonians of the form
\begin{equation}
H=H_{0}+\beta H_{1}\label{eq:1.22}
\end{equation}
 can be studied by solving exactly for the dynamics generated by $H_{0}$
and (\ref{eq:1.22}) is reduced to a system similar to (\ref{eq:1.1})
using the interaction picture.

Another class of examples are slowly changing (in time) interactions
$H_{1}=H_{1}\left(\beta t\right)$.

\section{\label{sec:Averaging}Averaging}

In this section an averaging of the matrix $A\left(t\right)$ will
be introduced. This averaging can be performed successively. Equation
(\ref{eq:1.1}) is replaced to a hierarchical set of equations where
$A\left(t\right)$ and $\vec{c}\left(t\right)$ are replaced by $A_{n}$
and $\vec{c}_{n}$. The starting point of the hierarchy is $A_{0}\equiv A$
and $\vec{c}_{0}\equiv\vec{c}$ satisfying the basic equation
\begin{equation}
i\frac{\partial}{\partial t}\vec{c}=\beta A(t)\vec{c}\label{eq:2.1}
\end{equation}
 where $\vec{c}=(c^{(1)}...c^{(N)})$ is a vector and $A\left(t\right)$
is a matrix. We assume the following: \begin{assumption} \label{assump:1}$A(t)$
is, for each $t$, a hermitian $N\times N$ matrix, satisfying,
\begin{equation}
\sup_{t}\left\Vert A(t)\right\Vert <M<\infty.\label{eq:2.2}
\end{equation}
 \end{assumption} Under Assumption \ref{assump:1}, the solution
of (\ref{eq:1.1}) exists and is bounded uniformly in the usual vector
norm on $\mathbb{C}^{N}$ in time, since, due to the hermitian property
of $A(t)$,
\begin{equation}
\vec{c}(t)=\mathcal{T}\, e^{-i\beta\int_{0}^{t}A(s)ds}\vec{c}(0).\label{eq:2.3}
\end{equation}
Recall that
\[
\vec{c}\left(t\right)\equiv U\left(t\right)\vec{c}\left(0\right)=\vec{c}\left(0\right)+\sum_{n=1}^{\infty}\left(-i\right)^{n}\int_{0}^{t}\int_{0}^{t_{1}}\cdots\int_{0}^{t_{n-1}}A\left(t_{1}\right)\cdots A\left(t_{n}\right)\vec{c}\left(0\right)dt_{n}\cdots dt_{1}
\]
has a constant norm. $\mathcal{T}$ stands for the time ordering (see
\cite{reed1980methods} section 2) . Moreover, by taking derivatives,
we get:
\begin{equation}
\sup_{t}\left\Vert \vec{c}'(t)\right\Vert =\sup_{t}\left\Vert \frac{\beta}{i}A(t)\vec{c}(t)\right\Vert \leq\beta\sup_{t}\left\Vert A(t)\vec{c}(t)\right\Vert \leq\beta M
\end{equation}
where $\prime$ denote the derivative with respect to $t$. $\beta$
is a small positive number $0<\beta\ll1$. \\
We introduce now averaging over intervals of size $T_{0}$. Define
the average matrix
\begin{equation}
\bar{A}_{0}^{(n)}=\frac{1}{T_{0}}\int_{nT_{0}}^{(n+1)T_{0}}A(s)ds.\label{eq:2.5}
\end{equation}
 At a latter stage the relation between $T_{0}$ and $\beta$ will
be defined. The global averaged matrix is defined by

\begin{equation}
\bar{A}_{0}^{g}(t)=\bar{A}_{0}^{(n)}\qquad\text{for}\qquad nT_{0}\leq t<T_{0}(n+1).\label{eq:2.6}
\end{equation}
 Now one defines the propagator related to the averaged $A$,
\begin{equation}
U_{0}(t)=e^{-i\beta\bar{A}_{0}^{(n)}\left(t-nT_{0}\right)}\cdots e^{-i\beta\bar{A}_{0}^{(0)}T_{0}}\qquad\text{for}\qquad nT_{0}\leq t\leq(n+1)T_{0},\label{eq:2.7}
\end{equation}
 and its inverse

\begin{equation}
U_{0}^{-1}(t)=e^{+i\beta\bar{A}_{0}^{(0)}T_{0}}e^{+i\beta\bar{A}_{0}^{(1)}T_{0}}\cdots e^{i\beta\bar{A}_{0}^{(n)}(t-nT_{0})}.\label{eq:2.8}
\end{equation}
 It is easily verified that:
\begin{equation}
U_{0}(t)U_{0}^{-1}(t)=1\qquad U_{0}^{-1}(t)U_{0}(t)=1.
\end{equation}
 By direct differentiation one finds:

\begin{equation}
i\frac{\partial}{\partial t}U_{0}(t)=\beta\bar{A}_{0}^{(n)}U_{0}(t)=\beta\bar{A}_{0}^{g}(t)U_{0}(t),\label{eq:2.10}
\end{equation}
 while the inverse satisfies
\begin{equation}
i\frac{\partial}{\partial t}U_{0}^{-1}\left(t\right)=U_{0}^{-1}\left(-\beta\bar{A}_{0}^{(n)}\right)=-\beta U_{0}^{-1}\bar{A}_{0}^{g}(t).\label{eq:2.11}
\end{equation}
 Therefore, $U_{0}$ is the propagator of the partially averaged motion
generated by $\bar{A}_{0}^{g}\left(t\right)$. The corresponding solution
of the averaged equation is,
\begin{equation}
\vec{c}_{1}\left(t\right)=U_{0}^{-1}(t)\vec{c}(t),\label{eq:2.12}
\end{equation}
as we demonstrate in what follows. It satisfies
\begin{eqnarray}
i\frac{\partial}{\partial t}\vec{c}_{1} & = & \left[i\frac{\partial}{\partial t}U_{0}^{-1}(t)\right]c(t)+U_{0}^{-1}(t)\left[i\frac{\partial}{\partial t}\vec{c}(t)\right]\label{eq:2.13}\\
 & = & \left[-\beta U_{0}^{-1}\bar{A}_{0}^{g}(t)\right]c(t)+U_{0}^{-1}\left[\beta A(t)\vec{c}(t)\right],\nonumber
\end{eqnarray}
 where (\ref{eq:2.10}) and (\ref{eq:2.11}) were used. Using (\ref{eq:2.12})
one gets
\begin{equation}
i\frac{\partial}{\partial t}\vec{c}_{1}(t)=-\beta U_{0}^{-1}\bar{A}_{0}^{g}(t)U_{0}(t)\vec{c}_{1}(t)+\beta U_{0}^{-1}A(t)U_{0}(t)\vec{c}_{1}(t).
\end{equation}
 Hence,
\begin{equation}
i\frac{\partial}{\partial t}\vec{c}_{1}(t)=\beta U_{0}^{-1}\left[A(t)-\bar{A}_{0}^{g}(t)\right]U_{0}(t)\vec{c}_{1}(t).
\end{equation}
 This equation is analogous to (\ref{eq:1.1}), if written in the
form
\begin{equation}
i\frac{\partial}{\partial t}\vec{c}_{1}(t)=\beta A_{1}(t)\vec{c}_{1}(t),\label{eq:2.16}
\end{equation}
 with
\begin{equation}
A_{1}(t)=U_{0}^{-1}\left[A(t)-\bar{A}_{0}^{g}(t)\right]U_{0}(t).\label{eq:2.17}
\end{equation}
 Performing on (\ref{eq:2.16}) operations, similar to the ones performed
on (\ref{eq:2.1}), leads to the next stage of the hierarchy. Before
doing that we analyze the meaning of the transformation to (\ref{eq:2.16}).
The dynamics generated by $A_{1}(t),$ is the peeling of $A(t)$ by
the average dynamics
\begin{equation}
\bar{A}_{0}^{g}(t)=\bar{A}_{0}^{(n)}(n=[t/T_{0}])=\frac{1}{T_{0}}\int_{nT_{0}}^{(n+1)T_{0}}A(s)ds,
\end{equation}
 where
\begin{equation}
n=\left[t/T_{0}\right]
\end{equation}
 is the integer part of $t/T_{0}$. We turn to estimate the quantity
\begin{eqnarray}
\int_{0}^{t}\left(A(t^{\prime})-\bar{A}_{0}^{g}(t^{\prime})\right)dt^{\prime} & = & \int_{0}^{T_{0}}dt^{\prime}\left[A(t^{\prime})-\bar{A}_{0}^{(0)}\right]+\int_{T_{0}}^{2T_{0}}dt^{\prime}\left[A(t^{\prime})-\bar{A}_{0}^{(1)}\right]\nonumber \\
 & +\cdots+ & \int_{t-nT_{0}}^{t}dt^{\prime}\left[A(t^{\prime})-\bar{A}_{0}^{(n)}\right]
\end{eqnarray}
 Due to (\ref{eq:2.5}) and (\ref{eq:2.6}),
\begin{equation}
\int_{kT_{0}}^{(k+1)T_{0}}dt^{\prime}\left(A(t^{\prime})-\bar{A}_{0}^{g}(t^{\prime})\right))=0.\label{eq:2.21}
\end{equation}
 Therefore,
\begin{eqnarray}
\left|\int_{0}^{t}dt^{\prime}\left(A(t')-\bar{A}_{0}^{g}(t')\right)\right| & \leq\max_{t'} & \left|A(t^{\prime})-\bar{A}_{0}^{(n)}(t^{\prime})\right|(t-nT_{0})\nonumber \\
 & \leq & \max_{t'}\left|A(t^{\prime})-\bar{A}_{0}^{(n)}(t')\right|T_{0}\label{eq:2.22}
\end{eqnarray}
 where $nT_{0}\leq t^{\prime}\leq(n+1)T_{0}$. This result can be
summarized as: \begin{lem} \label{lem:1}
\begin{equation}
\left\Vert \int_{0}^{t}\left(A(t^{\prime})-\bar{A}_{0}^{g}(t^{\prime})\right)dt^{\prime}\right\Vert \leq2|||A|||_{t}T_{0}.\label{eq:2.23}
\end{equation}
 Here $|||A|||_{t}=\sup_{0\leq t^{\prime}\leq t}\|A(t^{\prime})\|$.
We used the fact that $|||\bar{A}_{0}^{(n)}(t)|||_{t}\leq|||A(t)|||_{t}.$
\end{lem}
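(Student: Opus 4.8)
The plan is to exploit the mean-zero property of $A-\bar{A}_{0}^{g}$ on each complete averaging window, reducing the whole estimate to a single partial window of length at most $T_{0}$. First I would set $n=[t/T_{0}]$ and split the integral according to the windows on which $\bar{A}_{0}^{g}$ is constant, using (\ref{eq:2.6}) block by block:
\begin{equation}
\int_{0}^{t}\left(A(t')-\bar{A}_{0}^{g}(t')\right)dt'=\sum_{k=0}^{n-1}\int_{kT_{0}}^{(k+1)T_{0}}\left(A(t')-\bar{A}_{0}^{(k)}\right)dt'+\int_{nT_{0}}^{t}\left(A(t')-\bar{A}_{0}^{(n)}\right)dt'.
\end{equation}

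The second step is the cancellation on the full windows. By the definition (\ref{eq:2.5}) of $\bar{A}_{0}^{(k)}$ one has $\int_{kT_{0}}^{(k+1)T_{0}}A(s)\,ds=T_{0}\bar{A}_{0}^{(k)}$, so each term in the sum vanishes; this is exactly (\ref{eq:2.21}), and since $\bar{A}_{0}^{(k)}$ is a constant matrix on its window it is legitimate for the matrix-valued integrand. Hence only the last, incomplete window survives, and taking norms,
\begin{equation}
\left\Vert\int_{0}^{t}\left(A(t')-\bar{A}_{0}^{g}(t')\right)dt'\right\Vert=\left\Vert\int_{nT_{0}}^{t}\left(A(t')-\bar{A}_{0}^{(n)}\right)dt'\right\Vert\le(t-nT_{0})\sup_{nT_{0}\le t'\le(n+1)T_{0}}\left\Vert A(t')-\bar{A}_{0}^{(n)}\right\Vert.
\end{equation}

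Finally I would estimate the two factors. The length obeys $t-nT_{0}<T_{0}$ by the definition of $n$, and the triangle inequality gives $\left\Vert A(t')-\bar{A}_{0}^{(n)}\right\Vert\le\left\Vert A(t')\right\Vert+\left\Vert\bar{A}_{0}^{(n)}\right\Vert$, where $\left\Vert\bar{A}_{0}^{(n)}\right\Vert\le\frac{1}{T_{0}}\int_{nT_{0}}^{(n+1)T_{0}}\|A(s)\|\,ds\le|||A|||_{t}$; thus the integrand is bounded by $2|||A|||_{t}$ and combining the displays yields (\ref{eq:2.23}). There is essentially no obstacle here; the only points deserving a line of justification are that the telescoping cancellation (\ref{eq:2.21}) is valid for the matrix-valued integrand (it reduces to the scalar identity entrywise, equivalently to the constancy of $\bar{A}_{0}^{(k)}$ on its window), and that the convexity bound $\|\bar{A}_{0}^{(n)}\|\le|||A|||_{t}$ holds in operator norm, which is immediate from the triangle inequality for the integral of a matrix-valued function.
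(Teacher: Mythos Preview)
Your argument is correct and follows exactly the paper's own derivation preceding Lemma~\ref{lem:1}: split the integral into the windows $[kT_{0},(k+1)T_{0}]$, use (\ref{eq:2.21}) to kill the full windows, and bound the remaining partial window by $T_{0}\cdot 2|||A|||_{t}$ via the triangle inequality together with $\|\bar{A}_{0}^{(n)}\|\le|||A|||_{t}$. The only cosmetic difference is that you spell out the convexity estimate for $\|\bar{A}_{0}^{(n)}\|$ explicitly, whereas the paper simply records it as the fact used.
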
 In what follows we choose

\begin{equation}
T_{0}=\frac{1}{\sqrt{\beta}}.\label{eq:2.24}
\end{equation}
 In order to estimate the variation of $A_{1}$ it is useful to estimate
the quantity:
\begin{equation}
I_{1}=\int_{0}^{t}dt^{\prime}A_{1}(t^{\prime})\label{eq:2.25}
\end{equation}
 or explicitly
\begin{equation}
I_{1}=\int_{0}^{t}U_{0}^{-1}(t^{\prime})\left[A(t^{\prime})-\bar{A}_{0}^{g}(t^{\prime})\right]U_{0}(t^{\prime})dt^{\prime}\label{eq:2.26}
\end{equation}
\begin{lem} $I_{1}$ satisfies :
\begin{equation}
I_{1}=O\left(\frac{1}{\sqrt{\beta}}\right)+O\left(\sqrt{\beta}t\right).\label{eq:2.31 formally}
\end{equation}
\begin{proof}

For this purpose we note that integration by parts for matrices $M_{i}$
is:
\begin{eqnarray}
\int_{0}^{t}M_{1}(t^{\prime})M_{2}(t^{\prime})M_{3}(t^{\prime})dt^{\prime} & = & M_{1}(t)\left[\int_{0}^{t}M_{2}(t^{\prime})dt^{\prime}\right]M_{3}(t)\nonumber \\
 & - & \int_{0}^{t}dt^{\prime}\left(\frac{d}{dt^{\prime}}M_{1}(t^{\prime})\right)\left[\int_{0}^{t^{\prime}}M_{2}(s)ds\right]M_{3}(t^{\prime})\nonumber \\
 & - & \int_{0}^{t}dt^{\prime}M_{1}(t^{\prime})\left[\int_{0}^{t}M_{2}(s)ds\right]\left[\frac{dM_{3}(t^{\prime})}{dt^{\prime}}\right]\label{eq:2.27}
\end{eqnarray}
 Therefore, using (\ref{eq:2.26}), (\ref{eq:2.10}) and (\ref{eq:2.11}),
\begin{eqnarray}
I_{1} & = & U_{0}^{-1}(t)\left(\int_{0}^{t}\left[A(t^{\prime})-\bar{A}_{0}^{g}(t^{\prime})\right]dt^{\prime}\right)U_{0}(t)\label{eq:I1}\\
 & - & \int_{0}^{t}\left(\beta U_{0}^{-1}\bar{A}_{0}^{g}\left(t^{\prime}\right)\right)\left(\int_{0}^{t^{\prime}}\left[A(s)-\bar{A}_{0}^{g}(s)\right]ds\right)U_{0}(t^{\prime})dt^{\prime}\nonumber \\
 & - & \int_{0}^{t}U_{0}^{-1}(t^{\prime})\left(\int_{0}^{t^{\prime}}\left[A(s)-\bar{A}_{0}^{g}(s)\right]d't\right)\left(-i\beta\bar{A}_{0}^{g}(t^{\prime})\right)U_{0}(t^{\prime})dt^{\prime}.\nonumber
\end{eqnarray}
 Using (\ref{eq:2.23}) and (\ref{eq:2.24}) one finds that the first
term is of order,
\begin{equation}
\frac{1}{\sqrt{\beta}}
\end{equation}
 while the other two terms are of order,
\begin{equation}
\beta\frac{1}{\sqrt{\beta}}t=\sqrt{\beta}t.
\end{equation}
the result (\ref{eq:2.31 formally}) follows, using the fact that
$\bar{A_{0}^{g}}=O\left(1\right)$. \end{proof}

\end{lem}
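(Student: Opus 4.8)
The plan is to estimate $I_{1}=\int_{0}^{t}A_{1}(t')\,dt'$ directly from the integration-by-parts formula \eqref{eq:2.27}, taking $M_{1}=U_{0}^{-1}$, $M_{2}=A-\bar{A}_{0}^{g}$, $M_{3}=U_{0}$. First I would record the three ingredients that make the argument work: (i) $U_{0}$ and $U_{0}^{-1}$ are products of exponentials of $i$ times hermitian matrices, hence unitary, so $\|U_{0}(t')\|=\|U_{0}^{-1}(t')\|=1$ for all $t'$; (ii) their derivatives satisfy \eqref{eq:2.10}--\eqref{eq:2.11}, so $\|\tfrac{d}{dt'}U_{0}^{\pm1}\|\le\beta\,\|\bar{A}_{0}^{g}(t')\|\le\beta\,|||A|||_{t}$; and (iii) the running integral of $A-\bar{A}_{0}^{g}$ is controlled on the whole interval $[0,t']$ by Lemma \ref{lem:1}, namely $\|\int_{0}^{t'}(A-\bar{A}_{0}^{g})(s)\,ds\|\le 2|||A|||_{t}T_{0}$, uniformly in $t'\le t$. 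Note that (iii) is exactly the point where the choice of the averaging window matters: the cancellation \eqref{eq:2.21} over each full block $[kT_{0},(k+1)T_{0}]$ is what keeps this running integral bounded by a single $T_{0}$ rather than growing like $t'$.

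Next I would plug these bounds into \eqref{eq:I1}. The boundary term is $U_{0}^{-1}(t)\big(\int_{0}^{t}(A-\bar{A}_{0}^{g})\big)U_{0}(t)$; by (i) and (iii) its norm is at most $2|||A|||_{t}T_{0}=O(T_{0})=O(1/\sqrt\beta)$ after inserting \eqref{eq:2.24}. For the two remaining integrals, in each one the integrand is a product of a unitary factor (norm $1$), a factor $\beta\bar{A}_{0}^{g}(t')$ of norm $\le\beta|||A|||_{t}$, and the running integral of norm $\le 2|||A|||_{t}T_{0}$; so each integrand has norm $\le 2\beta|||A|||_{t}^{2}T_{0}$, and integrating over $[0,t]$ gives a contribution of size $2\beta|||A|||_{t}^{2}T_{0}\,t = 2|||A|||_{t}^{2}\sqrt\beta\,t$ by \eqref{eq:2.24}. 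Since $|||A|||_{t}\le M$ is a fixed constant and $\bar{A}_{0}^{g}=O(1)$, summing the three contributions yields
\begin{equation}
\|I_{1}\|\le 2M T_{0}+4M^{2}\beta T_{0}\,t = O\!\left(\tfrac{1}{\sqrt\beta}\right)+O\!\left(\sqrt\beta\, t\right),\nonumber
\end{equation}
which is \eqref{eq:2.31 formally}.

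The argument is essentially bookkeeping once \eqref{eq:2.27} and Lemma \ref{lem:1} are in hand, so I do not expect a serious obstacle; the one place demanding care is the second line of \eqref{eq:I1}, where the inner integral runs only up to $t'$ (as written in \eqref{eq:2.27}, the middle term carries $\int_{0}^{t'}M_{2}$), not up to $t$ — one must be sure to apply Lemma \ref{lem:1} on the variable interval $[0,t']$, which is legitimate precisely because the bound $2|||A|||_{t}T_{0}$ in \eqref{eq:2.23} is uniform for all upper limits $\le t$. A secondary point worth stating explicitly is that the $-i$ versus $1$ and the $\beta$ prefactors coming from \eqref{eq:2.10}--\eqref{eq:2.11} only affect constants, not the orders in $\beta$ and $t$, so the stated $O$-estimate is unaffected by these harmless discrepancies in the display.
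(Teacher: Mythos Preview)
Your proposal is correct and follows exactly the paper's approach: apply the matrix integration-by-parts identity \eqref{eq:2.27} with $M_{1}=U_{0}^{-1}$, $M_{2}=A-\bar{A}_{0}^{g}$, $M_{3}=U_{0}$, then bound the boundary term by Lemma~\ref{lem:1} and \eqref{eq:2.24} and the two remaining integrals using the derivative estimates \eqref{eq:2.10}--\eqref{eq:2.11} together with $\bar{A}_{0}^{g}=O(1)$. Your write-up is in fact more explicit than the paper's (you track the constants via $|||A|||_{t}\le M$ and note the uniformity of \eqref{eq:2.23} in the upper limit), but the argument is the same.
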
 We are now ready to state the first main theorem

\begin{thm}

\label{-(Time-Averaging:thoerm1}(Time Averaging for one step )

\end{thm}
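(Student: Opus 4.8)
The plan is to establish that the solution of (\ref{eq:1.1}) factors as $\vec c(t)=U_0(t)\vec c_1(t)$, with $U_0$ the averaged propagator (\ref{eq:2.7}) and $\vec c_1$ solving an equation (\ref{eq:2.16}) of the same form with hermitian generator $A_1$ given by (\ref{eq:2.17}) satisfying $\|A_1(t)\|\le 2M$, and moreover that
\[
\bigl\|\vec c(t)-U_0(t)\vec c(0)\bigr\|=O(\sqrt\beta)+O(\beta^{3/2}t)+O(\beta^{5/2}t^{2})
\]
uniformly in $\beta$, so that the averaged evolution approximates the true one to accuracy $O(\sqrt\beta)$ throughout $0\le t\lesssim\beta^{-1}$; this is the one-step averaging assertion. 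The structural part is essentially (\ref{eq:2.12})--(\ref{eq:2.17}): differentiating $\vec c_1=U_0^{-1}\vec c$ and using (\ref{eq:2.10}), (\ref{eq:2.11}), (\ref{eq:2.1}) yields (\ref{eq:2.16})--(\ref{eq:2.17}) and $\vec c_1(0)=\vec c(0)$. Since $U_0$ is unitary ($U_0^{\dagger}=U_0^{-1}$, visible from (\ref{eq:2.7})--(\ref{eq:2.8})) and $A$, $\bar A_0^g$ are hermitian, $A_1^{\dagger}=U_0^{-1}(A-\bar A_0^g)U_0=A_1$; equivalently $\|\vec c_1(t)\|=\|\vec c(t)\|=\|\vec c(0)\|$ is conserved, so Proposition \ref{prop:1} also gives the hermiticity. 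Hence $\|A_1(t)\|=\|A(t)-\bar A_0^g(t)\|\le 2M$ and, from (\ref{eq:2.16}), $\|\vec c_1'(t)\|=\beta\|A_1(t)\vec c_1(t)\|\le 2\beta M\|\vec c(0)\|$: the new solution $\vec c_1$ is slowly varying.

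The one genuinely new step is the error estimate. As $U_0$ is unitary, $\|\vec c(t)-U_0(t)\vec c(0)\|=\|\vec c_1(t)-\vec c(0)\|$, and integrating (\ref{eq:2.16}) gives $\vec c_1(t)-\vec c(0)=-i\beta\int_0^t A_1(s)\vec c_1(s)\,ds$. The crude bound $\le 2\beta M\|\vec c(0)\|\,t$ throws away the averaging and only helps for $t\ll\beta^{-1}$; instead I would integrate by parts, transferring the time integral onto $A_1$. Writing $I_1(t')=\int_0^{t'}A_1(s)\,ds$, so that the $s=0$ boundary term vanishes because $I_1(0)=0$,
\[
\vec c_1(t)-\vec c(0)=-i\beta\,I_1(t)\,\vec c_1(t)+i\beta\int_0^t I_1(s)\,\vec c_1'(s)\,ds .
\]
Inserting the bound $\|I_1(s)\|=O(\beta^{-1/2})+O(\beta^{1/2}s)$ from the preceding Lemma, together with $\|\vec c_1(t)\|=\|\vec c(0)\|$ and $\|\vec c_1'(s)\|\le 2\beta M\|\vec c(0)\|$, the first term is $O(\sqrt\beta)$ and the integral is $\beta\cdot 2\beta M\|\vec c(0)\|\int_0^t\bigl(O(\beta^{-1/2})+O(\beta^{1/2}s)\bigr)ds=O(\beta^{3/2}t)+O(\beta^{5/2}t^{2})$, which yields the asserted bound with all constants depending only on $M$ and $\|\vec c(0)\|$. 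In particular the error stays $O(\sqrt\beta)$ on all of $0\le t\lesssim\beta^{-1}$, far better than the $O(1)$ accuracy of the unaveraged estimate there; and since $\vec c_1$ again solves an equation of the form (\ref{eq:1.1}) with hermitian generator of norm $\le 2M$, the averaging of Section \ref{sec:Averaging} can be reapplied, which is the sense in which this is a single ``step''.

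I expect the one delicate point to be the time dependence of $\vec c_1$ inside $\int_0^t A_1\vec c_1\,ds$: one must not freeze $\vec c_1$ and pull it out, but rather use the integration by parts above so that the correction term inherits both the smallness $\|I_1\|=O(\beta^{-1/2})$ of the integrated averaged generator (this is where the $I_1$-estimate and the choice $T_0=\beta^{-1/2}$ from (\ref{eq:2.24}) are used) and the extra factor $\|\vec c_1'\|=O(\beta)$ coming from slow variation. Getting this balance right is exactly what holds the accumulated error to $O(\beta^{3/2}t)$ rather than $O(\beta t)$, and that reduced growth rate is precisely what the next stage requires in order to rescale $T_0$ and effectively replace $\beta$ by $\beta^{3/2}$; everything else is routine bookkeeping, with the constants staying bounded as $\beta\to0$ since $T_0=\beta^{-1/2}$ is already fixed.
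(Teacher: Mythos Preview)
Your argument is essentially the same as the paper's: integrate (\ref{eq:2.16}), integrate by parts so that the antiderivative $I_1(s)=\int_0^s A_1$ appears, bound the boundary term via the $I_1$-lemma and the integral term via $\|\vec c_1'\|=O(\beta)$ together with the same $I_1$-bound. The only cosmetic differences are that you work on $[0,t]$ whereas the theorem is stated on a general window $[T_i,T_f]$ (which is handled identically by taking the antiderivative from $T_i$ instead of $0$, giving $\int_{T_i}^s A_1=O(\beta^{-1/2})+O(\beta^{1/2}(s-T_i))$), and that you phrase the conclusion as $\|\vec c(t)-U_0(t)\vec c(0)\|$ rather than $\|\vec c_1(t)-\vec c_1(T_i)\|$; these are equivalent by unitarity of $U_0$, as you note.
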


Let $\vec{c}\left(t\right)\in\mathbb{C}^{N}\,,\, N<\infty$ satisfy
\begin{equation}
i\frac{\partial}{\partial t}\vec{c}\left(t\right)=\beta A\left(t\right)\vec{c}\left(t\right)
\end{equation}
where $A\left(t\right)$ are hermitian $N\times N$ matrices , for
each $t$ and $0<\beta\ll1$.

Assume moreover that
\begin{equation}
\sup_{t}\left\Vert A\left(t\right)\right\Vert <M<\infty.
\end{equation}
Then, for a partially time averaged unitary flow $U_{0}\left(t\right)$:
\begin{equation}
\vec{c}\left(t\right)=U_{0}\left(t\right)\vec{c}_{1}\left(t\right)
\end{equation}
and $c_{1}\left(t\right)$ satisfies the following estimate on any
time interval $T_{i}\leq t<T_{f}$:
\begin{equation}
\sup_{T_{i}\leq t<T_{f}}\left\Vert \vec{c}_{1}\left(t\right)-\vec{c}_{1}\left(T_{i}\right)\right\Vert \leq C\left(\sqrt{\beta}+\beta^{\frac{3}{2}}\left|T_{f}-T_{i}\right|+\beta^{\frac{5}{2}}\left|T_{f}-T_{i}\right|^{2}\right).
\end{equation}
 In particular for $\left|T_{f}-T_{i}\right|<\frac{1}{\beta}$ :
\begin{equation}
\sup_{T_{i}\leq t<T_{f}}\left\Vert \vec{c}_{1}\left(t\right)-\vec{c}_{1}\left(T_{i}\right)\right\Vert \leq C\beta^{\frac{1}{2}}.
\end{equation}

\begin{proof}

From Eq. \ref{eq:2.16} , we have that
\begin{equation}
i\left[\vec{c}_{1}\left(T_{f}\right)-\vec{c}_{1}\left(T_{i}\right)\right]=\int_{T_{i}}^{T_{f}}i\partial_{t}\vec{c}_{1}\left(t\right)dt=\beta\int_{T_{i}}^{T_{f}}A_{1}\left(t\right)\vec{c}_{1}\left(t\right)dt.
\end{equation}
 Integration by parts gives
\begin{equation}
\vec{c}_{1}\left(T_{f}\right)-\vec{c}_{1}\left(T_{i}\right)=\vec{b}_{1}+\vec{b}_{2},\label{eq:2.33-1}
\end{equation}
where
\[
b_{1}=\left.\frac{\beta}{i}\left(\int_{T_{i}}^{s}A_{1}\left(s'\right)ds'\right)\vec{c}_{1}\left(s\right)\right|_{s=T_{i}}^{T_{f}}
\]
And
\[
\vec{b}_{2}=-\frac{\beta}{i}\int_{T_{i}}^{T_{f}}\left(\int_{T_{i}}^{s}A_{1}\left(s'\right)ds'\right)\frac{\partial\vec{c}_{1}\left(s\right)}{\partial s}ds.
\]
By our estimates on $A_{1},$
\begin{equation}
\left\Vert \vec{b}_{1}\right\Vert =\left\Vert \beta\int_{T_{i}}^{T_{f}}A_{1}\left(s'\right)ds'\vec{c}_{1}\left(s\right)\right\Vert \leq C\beta\left\Vert \vec{c}_{1}\left(T_{f}\right)\right\Vert \left\Vert \int_{T_{i}}^{T_{f}}A_{1}\left(s'\right)ds'\right\Vert \leq C\beta\left(\frac{1}{\sqrt{\beta}}+\sqrt{\beta}\left|T_{f}-T_{i}\right|\right),
\end{equation}
by lemma 2. Similarly
\begin{align}
\left\Vert \vec{b}_{2}\right\Vert  & \leq C\beta\int_{T_{i}}^{T_{f}}ds\left(\frac{1}{\sqrt{\beta}}+\sqrt{\beta}s\right)\beta\left\Vert A\left(s\right)\right\Vert \nonumber \\
\leq & C\beta^{2}\left|T_{f}-T_{i}\right|\left(\frac{1}{\sqrt{\beta}}+\sqrt{\beta}\left|T_{f}-T_{i}\right|\right)\nonumber \\
= & C\beta^{\frac{3}{2}}\left(\left|T_{f}-T_{i}\right|+\beta\left|T_{f}-T_{i}\right|^{2}\right).\label{eq:2.35}
\end{align}
From these bounds the statement of the theorem follows \end{proof}

\begin{rem}

As seen from the proof , integration by parts shows that there are
two contributions to $\vec{c_{1}}\left(t\right)$ that are of different
type. The first contribution is from the boundary term and is of order
$\beta^{\frac{1}{2}}$ for all times, while the second term is smaller
for times less than $\frac{1}{\beta}$.

Furthermore the first term only depends on $\vec{c}_{1}\left(T_{f}\right)$
and not on intermediate times.

\end{rem}

\begin{rem}

The above observation about the structure of the first term allows
one to redefine $\vec{c}_{1}\left(t\right)$ by absorbing the first
term into its definition. This almost identity transformation of $\vec{c}_{1}\left(t\right)$
is a normal form transformation. It is the way the normal form method
of \cite{Soffer1999,Soffer2004} works : integration by parts , and
change of variables to absorb the boundary term. This approach gives
an equivalent normal form transformation to other methods \cite{Arnold1989,Ziane:2000:CRG,DeVille20081029,OMalleyJr20063}.

\end{rem}

We will use this method, adapted to our case to handle the multiple
iteration.

The above remark implies:

\begin{thm}\end{thm}

Under conditions of theorem \ref{-(Time-Averaging:thoerm1} above
,
\begin{equation}
\vec{c}\left(t\right)\equiv U_{0}\left(t\right)\tilde{U}^{-1}\tilde{c}_{1}\left(t\right)=U_{0}\left(t\right)\vec{c}_{1}\left(t\right)\label{eq:2.36-2}
\end{equation}
where
\begin{equation}
\tilde{U}f\equiv\left(I+i\beta\int_{T_{i}}^{t}A_{1}\left(s\right)ds\right)f\left(t\right).\label{eq:2.37-1-1}
\end{equation}
Then $\vec{c}_{1}$ satisfies
\begin{equation}
\sup_{T_{i}\leq t\leq T_{f}}\left\Vert \tilde{c}_{1}\left(t\right)-\tilde{c}_{1}\left(T_{i}\right)\right\Vert \leq C\left(\left|T_{f}-T_{i}\right|+\beta\left|T_{f}-T_{i}\right|^{2}\right)\beta^{\frac{3}{2}}.\label{eq:2.36-1}
\end{equation}

\begin{proof}
\begin{align}
i\frac{\partial}{\partial t}\tilde{c_{1}}\left(t\right) & =i\frac{\partial}{\partial t}\left(\tilde{U}\left(t\right)\vec{c_{1}}\left(t\right)\right)=-\beta A_{1}\left(t\right)\vec{c_{1}}\left(t\right)+\tilde{U}\left(t\right)\beta A_{1}\left(t\right)\vec{c_{1}}\left(t\right)\nonumber \\
= & -\beta A_{1}\left(t\right)\vec{c}_{1}\left(t\right)+\beta A_{1}\left(t\right)\vec{c_{1}}\left(t\right)+i\beta^{2}\left(\int_{T_{i}}^{t}A_{1}\left(s\right)ds\right)A_{1}\left(t\right)\vec{c}_{1}\left(t\right)\nonumber \\
= & i\beta^{2}\left(\int_{T_{i}}^{t}A_{1}\left(s\right)ds\right)A_{1}\left(t\right)\vec{c}_{1}\left(t\right)=\beta^{2}\left(\int_{T_{i}}^{t}A_{1}\left(s\right)ds\right)A_{1}\left(t\right)\tilde{U}^{-1}\tilde{c_{1}}\left(t\right).\nonumber \\
\end{align}
This equation can be obtained also from (\ref{eq:2.33-1}) replacing
$T_{f}$ by $t$, differentiating with respect to $t$ and using the
definition (\ref{eq:2.37-1-1}). Since $\tilde{U}$ is continuous,
the result (\ref{eq:2.36-1}) is found using the definition of $\tilde{c}_{1}\left(t\right)$,
namely (\ref{eq:2.36-2}) . This is detailed in section \ref{sec:Normal-form-transformation}.\end{proof}

We turn now to the higher levels of the hierarchy. First, we introduce
the global average of $A_{1}$ (as we did for $A=A_{0}$) and the
transformation $U_{1}$ corresponding to $U_{0}$. Let us define:
\begin{equation}
\bar{A}_{1}^{(n)}=\frac{1}{T_{0}}\int_{nT_{0}}^{(n+1)T_{0}}A_{1}(t)dt
\end{equation}
 Where $A_{1}(t)$ is given by (\ref{eq:2.17}). In analogy with (\ref{eq:2.6})
we define
\begin{equation}
\bar{A}_{1}^{g}(t)=\bar{A}_{1}^{(n)}\qquad\text{for }\qquad nT_{0}\leq t\leq(n+1)T_{0}.\label{eq:2.33}
\end{equation}
 To estimate this quantity we apply the integration by parts (\ref{eq:2.27})
to (\ref{eq:2.17}), resulting in
\begin{eqnarray}
\bar{A}_{1}^{(n)} & = & \frac{1}{T_{0}}\int_{nT_{0}}^{(n+1)T_{0}}U_{0}^{-1}(t)\left[A(t)-\bar{A}_{0}^{g}(t)\right]U_{0}(t)dt\nonumber \\
 & = & \frac{1}{T_{0}}\left[U_{0}^{-1}(t)\left(\int_{0}^{t}\left[A(t^{\prime})-\bar{A}_{0}^{g}(t^{\prime})\right]dt^{\prime}\right)U_{0}(t)\right]{}_{t=nT_{0}}^{(n+1)T_{0}}\label{eq:2.34}\\
 & - & \frac{1}{T_{0}}\int_{nT_{0}}^{(n+1)T_{0}}dt^{\prime}\left(\frac{d}{dt^{\prime}}U_{0}^{-1}(t^{\prime})\right)\left[\int_{0}^{t^{\prime}}(A(s)-\bar{A}_{0}^{g}(s))ds\right]U_{0}(t^{\prime})\nonumber \\
 & - & \frac{1}{T_{0}}\int_{nT_{0}}^{(n+1)T_{0}}U_{0}^{-1}(t^{\prime})\left[\int_{0}^{t^{\prime}}(A(s)-\bar{A}_{0}^{g}(s))ds\right]\frac{d}{dt}U_{0}(t^{\prime})dt^{\prime}.\nonumber
\end{eqnarray}
 The first term is zero since at time that is an integer multiple
of $T_{0}$, $\int_{0}^{t}\left[A(t)-\bar{A}_{0}^{g}(t)\right]$ vanishes
(see definition (\ref{eq:2.33-1})). From (\ref{eq:2.23}) and (\ref{eq:2.24})
we conclude that
\begin{equation}
\left|\int_{0}^{t}A(s)-\bar{A}_{0}^{g}(s)\right|<\frac{2\left\Vert A\right\Vert }{\sqrt{\beta}},
\end{equation}
The derivatives $dU_{0}^{-1}/dt$ and $dU_{0}(t)/dt$ are of order
$\beta$ due to (\ref{eq:2.10}) and (\ref{eq:2.11}), therefore the
second and third terms are of order $\sqrt{\beta}$. In conclusion,
\begin{equation}
\left|\bar{A}_{1}^{g}(t)\right|=O\left(\sqrt{\beta}\right).\label{eq:2.36}
\end{equation}
 Define in analogy to (\ref{eq:2.7}) and (\ref{eq:2.8}),
\begin{equation}
U_{1}(t)=e^{-i\beta\bar{A}_{1}^{(n)}(t-nT_{0})}\cdots e^{-i\beta\bar{A}_{1}^{(0)}T_{0}}\label{eq:2.37}
\end{equation}
 and
\begin{equation}
U_{1}^{-1}(t)=e^{i\beta\bar{A}_{1}^{(0)}T_{0}}\cdots e^{i\beta\bar{A}_{1}^{(n)}(t-nT_{0})}.
\end{equation}
 In analogy to (\ref{eq:2.10}) and (\ref{eq:2.11}) one finds
\begin{equation}
i\frac{\partial}{\partial t}U_{1}(t)=\beta\bar{A}_{1}^{g}(t)U_{1}(t)\label{eq:2.39}
\end{equation}
 and
\begin{equation}
i\frac{\partial}{\partial t}U_{1}^{-1}(t)=-\beta U_{1}^{-1}(t)\bar{A}_{1}^{g}(t).\label{eq:2.40}
\end{equation}
 In analogy to (\ref{eq:2.12}) define now
\begin{equation}
\vec{c}_{2}(t)=U_{1}^{-1}(t)\vec{c}_{1}(t)\label{eq:2.41}
\end{equation}
 and develop an equation analogous to (\ref{eq:2.16}), namely :
\begin{equation}
i\frac{\partial}{\partial t}\vec{c}_{1}(t)=\beta A_{1}(t)\vec{c}_{1}(t).
\end{equation}
 For this purpose we follow the steps (\ref{eq:2.13})-(\ref{eq:2.16}),
and obtain
\begin{eqnarray}
i\frac{\partial}{\partial t}\vec{c}_{2}(t) & = & \left[i\frac{\partial}{\partial t}U_{1}^{-1}(t)\right]\vec{c}_{1}(t)+U_{1}^{-1}(t)\left[i\frac{\partial}{\partial t}\vec{c}_{1}(t)\right]=\\
 & - & \beta U_{1}^{-1}\bar{A_{1}^{g}}(t)U_{1}\vec{c}_{2}(t)+\beta U_{1}^{-1}\bar{A}_{1}^{g}(t)U_{1}(t)\vec{c}_{2}(t),\nonumber
\end{eqnarray}
reducing to
\begin{equation}
i\frac{\partial}{\partial t}\vec{c}_{2}(t)=\beta U_{1}^{-1}(t)\left[A_{1}(t)-\bar{A}_{1}^{g}(t)\right]U_{1}(t)\vec{c}_{2}(t).\label{eq:2.44}
\end{equation}
 Define
\begin{equation}
A_{2}(t)=U_{1}^{-1}\left[A_{1}(t)-\bar{A}_{1}^{g}(t)\right]U_{1}(t),\label{eq:2.45}
\end{equation}
 leading to
\begin{equation}
i\frac{\partial}{\partial t}\vec{c}_{2}=\beta A_{2}(t)\vec{c}_{2}(t).\label{eq:2.46}
\end{equation}
 Equation (\ref{eq:2.46}) is similar in nature to (\ref{eq:2.16})
and is the following equation in the hierarchy. Estimates of $A_{2}$
can be performed in the same way as the estimates of $A_{1}.$ The
integral
\begin{equation}
I_{2}=\int_{0}^{t}dt'\, A_{2}\left(t'\right)
\end{equation}
 can be estimated in the same way as $I_{1}$ in (\ref{eq:I1}), namely,
\begin{eqnarray}
I_{2} & = & U_{1}^{-1}(t)\left(\int_{0}^{t}\left[A_{1}(t^{\prime})-\bar{A}_{1}^{g}(t^{\prime})\right]dt^{\prime}\right)U_{0}(t)\label{eq:I2}\\
 & - & \int_{0}^{t}\left(\beta U_{1}^{-1}\bar{A}_{1}^{g}\left(t^{\prime}\right)\right)\left(\int_{0}^{t^{\prime}}\left[A_{1}(s)-\bar{A}_{1}^{g}(s)\right]ds\right)U_{1}(t^{\prime})dt^{\prime}\nonumber \\
 & - & \int_{0}^{t}U_{1}^{-1}(t^{\prime})\left(\int_{0}^{t^{\prime}}\left[A_{1}(s)-\bar{A}_{1}^{g}(s)\right]d't\right)\left(-i\beta\bar{A}_{1}^{g}(t^{\prime})\right)U_{1}(t^{\prime})dt^{\prime}.\nonumber
\end{eqnarray}
 Due to (\ref{eq:2.36-1}), (\ref{eq:2.36}) holds.
\begin{equation}
\left|\bar{A}_{1}^{g}\right|=O\left(\sqrt{\beta}\right),\label{eq:2.49}
\end{equation}
 By reasoning similar to (\ref{eq:2.22})
\begin{equation}
\left|\int_{0}^{t}\left(A_{1}\left(t'\right)-\bar{A}_{1}^{g}\left(t'\right)\right)dt'\right|=O\left(\frac{1}{\sqrt{\beta}}\right),\label{eq:2.50}
\end{equation}
 and
\begin{equation}
I_{2}=O\left(\frac{1}{\sqrt{\beta}}\right)+O\left(\beta t\right).\label{eq:2.51}
\end{equation}
The difference between $I_{1}$and $I_{2}$ results of the fact that
$\bar{A_{1}^{g}}$ is of the order $\sqrt{\beta}$ while $\bar{A_{0}^{g}}$
is of order 1.

To estimate the magnitude of $\bar{A}_{2}^{\left(g\right)}$ defined
(in analogy to (\ref{eq:2.33-1})) as we note that
\begin{equation}
\bar{A}_{2}^{\left(g\right)}=\bar{A}_{2}^{\left(n\right)}\qquad\text{for}\qquad nT_{0}\leq t\leq\left(n+1\right)T_{0},
\end{equation}
 where
\begin{equation}
\bar{A}_{2}^{\left(n\right)}=\frac{1}{T_{0}}\int_{nT_{0}}^{\left(n+1\right)T_{0}}A_{2}\left(t'\right)dt'.\label{eq:2.53a}
\end{equation}
 We now repeat the calculation of (\ref{eq:2.34}), and use (\ref{eq:2.49})
(\ref{eq:2.40}) and (\ref{eq:2.41}) combined with (\ref{eq:2.36-1})
to estimate $dU_{1}/dt$ and $dU_{1}^{-1}/dt$. These derivatives
are of order $\beta^{3/2}$. Consequently,
\begin{equation}
\left|\bar{A}_{2}^{\left(g\right)}\right|=O\left(\beta\right).\label{eq:2.52}
\end{equation}
 We have constructed explicitly the first two stages of the hierarchy:
\begin{equation}
\vec{c}=U_{0}\vec{c}_{1}\label{eq:2.53}
\end{equation}
 and
\begin{equation}
\vec{c}_{1}=U_{1}\vec{c}_{2}.\label{eq:2.54}
\end{equation}
 The process can be continued further repeating (\ref{eq:2.17}) and
(\ref{eq:2.45}), defining
\begin{equation}
A_{n+1}(t)=U_{n}^{-1}(t)\left[A_{n}(t)-\bar{A}_{n}^{g}(t)\right]U_{n}(t).\label{eq:2.57}
\end{equation}
 The equations similar to (\ref{eq:2.16}) and (\ref{eq:2.46}) are
\begin{equation}
i\frac{\partial}{\partial t}\vec{c}_{n}\left(t\right)=\beta A_{n}\left(t\right)\vec{c}_{n}\left(t\right).
\end{equation}
 The results of the present section can be summarized in \begin{prop}
\label{prop:2}\end{prop}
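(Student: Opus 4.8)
The plan is to prove Proposition~\ref{prop:2} by induction on the hierarchy index $n$, since the statement is just the uniform-in-$n$ version of what has been verified by hand for $n=0,1,2$ in this section: that there exist unitary ``peel-off'' operators $U_0,\dots,U_{n-1}$ with $\vec c=U_0U_1\cdots U_{n-1}\,\vec c_n$, that $\vec c_n$ solves $i\partial_t\vec c_n=\beta A_n(t)\vec c_n$ with $A_n(t)$ hermitian and $\sup_t\|A_n(t)\|<\infty$, and that the size estimates $\|\bar A_n^g(t)\|=O(\beta^{n/2})$ and $\bigl\|\int_0^t A_n(s)\,ds\bigr\|=O(\beta^{-1/2})+O(\beta^{n/2}t)$ hold (so that, integrating by parts as in the proof of Theorem~\ref{-(Time-Averaging:thoerm1}, $\vec c_n$ varies by only $O(\beta^{1/2})$ over time windows of length up to $O(\beta^{-(n+2)/2})$). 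The cases $n=0,1,2$ are precisely Assumption~\ref{assump:1}, Lemma~\ref{lem:1}, (\ref{eq:2.31 formally}), (\ref{eq:2.36}), (\ref{eq:2.49})--(\ref{eq:2.51}) and (\ref{eq:2.52}); so it remains to run the inductive step.

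Assume the claim at level $n$. First I would define $U_n$ and $U_n^{-1}$ by the formulas (\ref{eq:2.37})ff.\ with $\bar A_n^{(k)}$ in place of $\bar A_0^{(k)}$; since $\bar A_n^{(k)}$ is hermitian (a time-average of the hermitian $A_n$), $U_n$ is unitary, and differentiating as in (\ref{eq:2.39})--(\ref{eq:2.40}) gives $i\partial_tU_n=\beta\bar A_n^gU_n$ and $i\partial_tU_n^{-1}=-\beta U_n^{-1}\bar A_n^g$, so that $\|dU_n^{\pm1}/dt\|\le\beta\|\bar A_n^g\|=O(\beta^{1+n/2})=O(\beta^{(n+2)/2})$ by the inductive estimate on $\bar A_n^g$. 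Then set $A_{n+1}:=U_n^{-1}\bigl[A_n-\bar A_n^g\bigr]U_n$ and $\vec c_{n+1}:=U_n^{-1}\vec c_n$ as in (\ref{eq:2.57}); repeating the computation (\ref{eq:2.44})--(\ref{eq:2.46}) gives $i\partial_t\vec c_{n+1}=\beta A_{n+1}\vec c_{n+1}$. Since $\vec c_{n+1}=U_n^{-1}\cdots U_0^{-1}\vec c$ has constant norm, $A_{n+1}$ is hermitian by Proposition~\ref{prop:1}, and $\sup_t\|A_{n+1}\|\le 2\sup_t\|A_n\|<\infty$.

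Next I would reproduce the two integration-by-parts estimates. Because $\int_{kT_0}^{(k+1)T_0}(A_n-\bar A_n^g)\,ds=0$ by construction of the global average, the reasoning of (\ref{eq:2.22})--(\ref{eq:2.23}) gives $\bigl\|\int_0^t(A_n-\bar A_n^g)\,ds\bigr\|\le 2\sup_t\|A_n\|\,T_0=O(\beta^{-1/2})$. Inserting this together with the derivative bound above into the matrix integration by parts (\ref{eq:2.27}), applied to $\int_0^tA_{n+1}=\int_0^tU_n^{-1}[A_n-\bar A_n^g]U_n\,dt$ exactly as in (\ref{eq:I1}) and (\ref{eq:I2}): the boundary term is $O(\beta^{-1/2})$ and each of the remaining two terms is $O(\beta^{(n+2)/2})\cdot O(\beta^{-1/2})\cdot t=O(\beta^{(n+1)/2}t)$, which is the level-$(n+1)$ estimate on $\int_0^tA_{n+1}$. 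Running the computation (\ref{eq:2.34}) with $(A_n,U_n)$ replacing $(A_0,U_0)$: the boundary term vanishes at integer multiples of $T_0$ (since $\int_0^{kT_0}(A_n-\bar A_n^g)=0$), and the two integral terms are each bounded by $\|dU_n^{\pm1}/dt\|\cdot O(\beta^{-1/2})=O(\beta^{(n+1)/2})$ (the $1/T_0$ prefactor cancelling the $T_0$-long interval of integration), so $\|\bar A_{n+1}^g\|=O(\beta^{(n+1)/2})$ and the induction closes.

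The one step that needs genuine care --- and the one that makes the whole scheme work --- is the bookkeeping of powers of $\beta$: one must use the \emph{refined} bound $\bar A_n^g=O(\beta^{n/2})$, not merely the crude $\bar A_n^g=O(1)$ that follows from $\|A_n\|\le\text{const}$, precisely at the spot where $dU_n/dt$ appears, since it is that conversion which buys one extra half-power of $\beta$ at each stage, in both the average estimate and the drift estimate. A minor point worth recording is that the norm bound on $A_n$ roughly doubles at each step, $\sup_t\|A_{n+1}\|\le 2\sup_t\|A_n\|$, hence $\sup_t\|A_n\|\le 2^nM$; this is harmless because Proposition~\ref{prop:2} is asserted for each fixed finite $n$, and it is the reason the implied constants may depend on $n$. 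Everything else is a verbatim repetition of the $n=1$ and $n=2$ computations already displayed, together with Proposition~\ref{prop:1} to keep every $U_n$ unitary and every $A_n$ hermitian.
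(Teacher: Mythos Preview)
Your proposal is correct and follows exactly the approach the paper intends: the paper does not give a separate proof of Proposition~\ref{prop:2} at all, but simply declares that items (1)--(6) ``summarize'' the section and are the generalizations of (\ref{eq:2.16})/(\ref{eq:2.46}), (\ref{eq:2.53})/(\ref{eq:2.54}), (\ref{eq:2.23})/(\ref{eq:2.50}), (\ref{eq:2.21}), (\ref{eq:2.36})/(\ref{eq:2.52}), and (\ref{eq:2.31 formally})/(\ref{eq:2.51}) respectively, leaving the induction implicit. Your write-up supplies precisely that induction, and your bookkeeping is in fact slightly sharper than the paper's printed statement (you get $I_{n+1}=O(\beta^{-1/2})+O(\beta^{(n+1)/2}t)$, which matches the explicit cases $I_1,I_2$; the paper's item (6) has $\beta^{n/2}t$, apparently an indexing slip).
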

\begin{enumerate}
\item
\[
i\frac{\partial}{\partial t}\vec{c}_{n}=\beta A_{n}(t)\vec{c}_{n}
\]

\item
\[
\vec{c}=U_{0}U_{1}\cdots U_{n-1}\vec{c}_{n}
\]

\item
\[
\left|\int_{0}^{t}A_{n}(s)-\bar{A}_{n}^{g}(s)ds\right|\leq\frac{\text{const}}{\sqrt{\beta}},
\]
since $\left|A\right|$ is bounded.
\item
\[
\left|A_{n}(s)-\bar{A}_{n}^{g}(s)\right|=O(1)
\]
leading to
\item
\[
\bar{A}_{n}^{g}=O\left(\beta^{n/2}\right)
\]
and
\item
\[
I_{n+1}=\left|\int_{0}^{t}U_{n}^{-1}(s)\left[A_{n}(s)-\bar{A}_{n}^{g}(s)\right]U_{n}(s)\right|\leq O\left(\frac{1}{\sqrt{\beta}}\right)+O\left(\beta^{n/2}t\right).
\]
Note that :
\end{enumerate}
\hphantom{}
\begin{enumerate}
\item is a generalization of (\ref{eq:2.16}) and (\ref{eq:2.46}).
\item results of a repeated application of the transformation like (\ref{eq:2.53})
and (\ref{eq:2.54}).
\item is a generalization of (\ref{eq:2.23}) and (\ref{eq:2.50}).
\item is a generalization of (\ref{eq:2.21}).
\item is a generalization of (\ref{eq:2.36-1}) and (\ref{eq:2.52}).
\item is a generalization of (\ref{eq:2.31 formally}) and (\ref{eq:2.51}).
\end{enumerate}
\begin{rem} From (\ref{eq:2.57}) we conclude that if the limiting
operator exists, it solves the following self-averaging equation:
\begin{equation}
\frac{1}{T_{0}}\int_{0}^{T_{0}}U_{\infty}^{-1}(s)\left[A_{\infty}(s)-Q(s)\right]U_{\infty}(s)ds=Q,
\end{equation}

\end{rem} where
\begin{equation}
U_{\infty}=\lim_{n\to\infty}U_{n}\qquad A_{\infty}=\lim_{n\to\infty}A_{n}\qquad Q=\lim_{n\to\infty}\bar{A}_{n}^{g}.
\end{equation}

\section{\label{sec:Normal-form-transformation}Normal form transformation}

In this section a normal form transformation will be applied to $\vec{c}_{2}\left(t\right)$
resulting in a new vector $\vec{c}_{2}^{U}\left(t\right)$. This vector
satisfies an equation similar to (\ref{eq:1.1}) but with $\beta$
replaced by $\beta^{3/2}$.

\begin{thm}\end{thm}

Under the conditions of Theorem 1 on $A\left(t\right)$, we have :
\begin{equation}
\vec{c}\left(t\right)=U_{0}U_{1}\tilde{U_{2}}^{-1}\vec{c}_{2}^{\,\, U}\left(t\right)
\end{equation}
satisfies
\begin{equation}
i\frac{\partial}{\partial t}c_{2}^{U}\left(t\right)=\beta^{\frac{3}{2}}\tilde{A}\left(t\right)\vec{c}_{2}^{\,\, U}\left(t\right),
\end{equation}
 where $U_{0}$ is the evolution operator of the partially averaged
dynamics (on time scales of order $\frac{1}{\sqrt{\beta}}$), and
$\tilde{U}_{2}$ is a unitary (normal form) almost identity transformation
while $\tilde{A}\left(t\right)$ satisfies the same conditions as
$A\left(t\right)$ of Theorem 1.

\begin{proof}

We start from (\ref{eq:2.44})
\begin{equation}
i\frac{\partial}{\partial t}\vec{c}_{2}=\beta U_{1}^{-1}(t)\left[A_{1}(t)-\bar{A}_{1}^{g}(t)\right]U_{1}(t)\vec{c}_{2}(t).
\end{equation}
 Integrating by parts, we get:
\begin{eqnarray}
\vec{c}_{2}(t)-\vec{c}_{2}(0) & = & -i\beta\int_{0}^{t}U_{1}^{-1}(t^{\prime})\left[A_{1}-\bar{A}_{1}^{g}(t^{\prime})\right]U_{1}(t^{\prime})\vec{c}_{2}(t^{\prime})dt^{\prime}\nonumber \\
 & = & -i\beta U_{1}^{-1}(t)\left(\int_{0}^{t}\left[A_{1}(t^{\prime})-\bar{A}_{1}^{g}(t^{\prime})\right]dt^{\prime}\right)U_{1}(t)\vec{c}_{2}(t)\\
 & + & i\beta\int_{0}^{t}dt^{\prime}\frac{dU_{1}^{-1}(t^{\prime})}{dt^{\prime}}\left(\int_{0}^{t^{\prime}}\left[A_{1}(s)-\bar{A}_{1}^{g}(s)\right]ds\right)U_{1}(t)\vec{c}_{2}(t)\nonumber \\
 & + & i\beta\int_{0}^{t}dt^{\prime}U_{1}^{-1}(t^{\prime})\left(\int_{0}^{t^{\prime}}\left[A_{1}(s)-\bar{A}_{1}^{g}(s)\right]ds\right)\frac{dU_{1}}{dt^{\prime}}(t^{\prime})\vec{c}_{2}(t^{\prime})\nonumber \\
 & + & i\beta\int_{0}^{t}dt^{\prime}U_{1}^{-1}(t^{\prime})\left(\int_{0}^{t^{\prime}}\left[A_{1}(s)-\bar{A}_{1}^{g}(s)\right]ds\right)U_{1}(t^{\prime})\frac{d}{dt^{\prime}}\vec{c}_{2}(t^{\prime}).\nonumber
\end{eqnarray}
 Using (\ref{eq:2.39}), (\ref{eq:2.40}) and (\ref{eq:2.46}) one
finds
\begin{eqnarray}
 &  & \vec{c}_{2}(t)-\vec{c}_{2}(0)+i\beta U_{1}^{-1}(t)\left(\int_{0}^{t}\left[A_{1}(t^{\prime})-\bar{A}_{1}^{g}(t^{\prime})\right]dt^{\prime}\right)U_{1}(t)\vec{c}_{2}(t)\label{eq:3.3}\\
 & = & i\beta\int_{0}^{t}dt^{\prime}\left(+i\beta U_{1}^{-1}(t)\right)\bar{A}_{1}^{g}(t)\left(\int_{0}^{t^{\prime}}\left[A_{1}(s)-\bar{A}_{1}^{g}(s)\right]ds\right)U_{1}(t^{\prime})\vec{c}_{2}(t^{\prime})\nonumber \\
 & + & i\beta\int_{0}^{t}dt^{\prime}U_{1}^{-1}(t^{\prime})\left(\int_{0}^{t^{\prime}}\left[A_{1}(s)-\bar{A}_{1}^{g}(s)\right]ds\right)(-i\beta)\bar{A}_{1}^{g}(t^{\prime})U_{1}(t^{\prime})\vec{c}_{2}(t^{\prime})\nonumber \\
 & + & i\beta\int_{0}^{t}dt^{\prime}U_{1}^{-1}(t^{\prime})\left(\int_{0}^{t^{\prime}}\left[A_{1}(s)-\bar{A}_{1}^{g}(s)\right]ds\right)U_{1}(t^{\prime})(-i\beta)A_{2}(t^{\prime})\vec{c}_{2}(t^{\prime})\nonumber \\
 & = & -\beta^{2}\int_{0}^{t}dt^{\prime}U_{1}^{-1}(t^{\prime})\bar{A}_{1}^{g}(t^{\prime})\int_{0}^{t^{\prime}}\left[A_{1}(s)-\bar{A}_{1}^{g}(s)\right]U_{1}(t^{\prime})\vec{c}_{2}(t^{\prime})\nonumber \\
 & + & \beta^{2}\int_{0}^{t}dt^{\prime}U_{1}^{-1}(t^{\prime})\left(\int_{0}^{t^{\prime}}\left[A_{1}(s)-\bar{A}_{1}^{g}(s)\right]ds\right)\bar{A}_{1}^{g}(t^{\prime})U_{1}(t^{\prime})\vec{c}_{2}(t^{\prime})\nonumber \\
 & + & \beta^{2}\int_{0}^{t}dt^{\prime}U_{1}^{-1}(t^{\prime})\left(\int_{0}^{t^{\prime}}\left[A_{1}(s)-\bar{A}_{1}^{g}(s)\right]ds\right)U_{1}(t^{\prime})A_{2}(t^{\prime})\vec{c}_{2}(t^{\prime}).\nonumber
\end{eqnarray}
 By (\ref{eq:2.50}) and Proposition \ref{prop:2},

\begin{equation}
\left|\int_{0}^{t}(A_{1}(s)-\bar{A}_{1}^{g}(s)ds\right|<O\left(\frac{1}{\sqrt{\beta}}\right)\label{eq:3.4}
\end{equation}
 and by \ref{eq:2.36-1}
\begin{equation}
|A_{2}|\sim O(1)\quad\text{and}\quad\bar{A}_{1}^{g}\sim O(\sqrt{\beta}),\label{eq:3.5}
\end{equation}
 while $\bar{A}_{2}^{g}=O\left(\beta\right)$ by (\ref{eq:2.52}).
The first two terms on the RHS of (\ref{eq:3.3}) are of order $\beta^{2}$
and the last is term of order $\beta^{3/2}$. We turn now to perform
the normal form transformation. For this purpose we rewrite (\ref{eq:3.3})
in the form
\begin{equation}
\vec{c}_{2}(t)+i\beta U_{1}^{-1}(t)\left(\int_{0}^{t}\left[A_{1}(t^{\prime})-\bar{A}_{1}^{g}(t^{\prime})\right]dt^{\prime}\right)U_{1}\left(t\right)\vec{c}_{2}(t)-\vec{c}_{2}\left(0\right)=\text{RHS}.\label{eq:3.6}
\end{equation}
 Then we define,
\begin{equation}
\vec{c}_{2}^{U}\equiv\tilde{U}\vec{c}_{2}\label{eq:3.7}
\end{equation}
 where
\begin{equation}
\tilde{U}=1+i\beta U_{1}^{-1}(t)\left(\int_{0}^{t}\left[A_{1}(t^{\prime})-\bar{A_{1}^{g}}(t^{\prime})\right]dt^{\prime}\right)U_{1}(t).\label{eq:3.8}
\end{equation}
 With this definition (\ref{eq:3.6}) takes the form
\begin{equation}
\vec{c}_{2}^{U}\left(t\right)=\tilde{U}\vec{c}_{2}\left(t\right)-\vec{c}_{2}\left(0\right)=\text{RHS},
\end{equation}
 and differentiation of this equation with respect to time yields,
\begin{equation}
i\frac{\partial}{\partial t}\vec{c}_{2}^{U}=\beta^{3/2}\tilde{A}(t)\vec{c}_{2}^{U}\label{eq:3.10}
\end{equation}
 where
\begin{eqnarray}
\tilde{A}(t) & = & \beta^{1/2}\left[\left(-U_{1}^{-1}(t)\bar{A}_{1}^{g}(t)\right)\left(\int_{0}^{t}\left[A_{1}(s)-\bar{A}_{1}^{g}(s)\right]ds\right)U_{1}(t)\right.\nonumber \\
 & + & U_{1}^{-1}(t)\left(\int_{0}^{t}\left[A_{1}(s)-\bar{A}_{1}^{g}(s)\right]ds\right)\bar{A}_{1}^{g}(t)U_{1}(t)\nonumber \\
 & + & \left.U_{1}^{-1}(t)\int_{0}^{t}\left[A_{1}(s)-\bar{A_{1}^{g}}(s)\right]U_{1}(t)A_{2}(t)\right]\tilde{U}^{-1}(t).
\end{eqnarray}
 By (\ref{eq:3.4}) and (\ref{eq:3.5}) and $\left\Vert \tilde{A}(t)\right\Vert =O\left(1\right)$.
From the definition of (\ref{eq:3.8}), it is clear that $\tilde{U}$
is also of order $O\left(1\right)$. It should be noted, however,
that the normal form transformation above, is not unitary. Therefore,
there is more than one way, to make it a unitary operator, by adding
a correction to $\tilde{A}$, of higher order in $\beta$. We do that
by redefining $\tilde{U}$:
\[
\widetilde{U}=1+i\beta U_{1}^{-1}B(t)U_{1}\rightarrow\mathcal{T}\, e^{i\beta\int_{0}^{t}U_{1}^{-1}(s)B(s)U_{1}(s)ds}
\]
 Where $B(t)=\int_{0}^{t}\left[A_{1}(t^{\prime})-\bar{A}_{1}^{g}(t^{\prime})\right]dt^{\prime}$
is self-adjoint. By choosing a unitary $\tilde{U}$ to generate the
normal form transformation, we ensure that the $L^{2}$ norm does
not change in time. Therefore, by Proposition \ref{prop:1}, the overall
flow, generated by a product of unitary flows, is given in terms of
a time dependent self-adjoint generator. \end{proof}

Equation (\ref{eq:3.10}) is similar to (\ref{eq:1.1}) with the replacement,

\begin{eqnarray}
\vec{c} & \rightarrow & \vec{c}_{2}^{U}\nonumber \\
A & \rightarrow & \tilde{A}\nonumber \\
\beta & \rightarrow & \beta^{3/2}.
\end{eqnarray}
 This reduces $\beta$ effectively and increases the averaging time
$T_{0}$.

\begin{cor} \end{cor}

We can continue the process and find $\vec{c}_{n}\left(t\right)$,
such that
\begin{equation}
\vec{c}=V_{1}\cdots V_{n-1}\vec{c}_{n}.
\end{equation}
 Using (\ref{eq:2.53}) and (\ref{eq:2.54})
\begin{equation}
\vec{c}=U_{1}U_{2}\vec{c}_{2},
\end{equation}
 and using (\ref{eq:3.7})
\begin{eqnarray}
\vec{c}_{2} & = & \tilde{U}^{-1}\vec{c}_{2}^{U}.
\end{eqnarray}
 If one defines,
\begin{eqnarray}
V_{1} & = & U_{1}U_{2}\tilde{U}^{-1},
\end{eqnarray}
 then
\begin{equation}
\vec{c}=V_{1}\vec{c}_{2}^{U}
\end{equation}
 Repeating these steps results in a process given in (3.12) after
an appropriate renumbering of the $\vec{c}_{n}$. The $n-$th step
of this process (\ref{eq:1.7}), results in the replacement
\begin{eqnarray}
V_{1} & \rightarrow & V_{n}\\
U_{1} & \rightarrow & U_{1,n}\nonumber \\
U_{2} & \rightarrow & U_{2,n}\nonumber \\
\tilde{U} & \to & \tilde{U}_{n}.\nonumber
\end{eqnarray}

\section{\label{sec:Quasiperiodic-System}Quasiperiodic System}

A crucial difficulty with perturbative schemes, is that these often
produce terms with arbitrarily slow frequencies. Such terms lead,
upon integration to \emph{small divisors}. Here, we will treat such
a problem, and show, how to get the large time behavior of the system,
to any order, via multi-scale time averaging. Our main example is
the following: \begin{exmp} Let $M_{j}$ be $N\times N$ matrices
with norm bounded by $\|M_{j}\|\leq j^{-1-\delta},$ for all $j>0.$
Assume,
\begin{equation}
A=\sum_{j=1}^{\infty}(M_{j}e^{i\omega_{j}t}+h.c)\label{eq:4.1}
\end{equation}
 where h.c stands for hermitian conjugate. Note that norm convergence
of the sum is assured by our assumption on the norms of $M_{j}$.
The case of interest is when $\omega_{j}\rightarrow0$ as $j\rightarrow\infty.$
\end{exmp}

To see how such systems arise , consider the problem of solving a
system of the type
\begin{align}
i\frac{d\vec{c}}{dt} & =A_{0}\vec{c}+\beta F\left(\vec{c}\right)\vec{c}\,\,\,,\,\,\vec{c}\in\mathbb{C^{N}}\\
 & \vec{c}\left(t=0\right)=\vec{c}_{0}
\end{align}
 where ~ $A_{0}$ is a time Independent and hermitian matrix with
eigenvalues $E_{1},E_{2}.....E_{N}$; $F\left(\vec{c}\right)$ is
an $\left(N\times N\right)$ matrix (depending on $\vec{c}$ ) which
we assume is also hermitian for all $\vec{c}$.

Then we try to solve this problem by iterations : solving the first
order system gives
\begin{equation}
\vec{c}_{0}\left(t\right)=e^{-iA_{0}t}\vec{c_{0}}
\end{equation}
which is a sum of oscillations with frequencies $E_{1},....,E_{N},$
Iterating once we get after defining
\begin{equation}
\vec{c}_{1}\left(t\right)=e^{+iA_{0}t}\vec{c_{0}}
\end{equation}
 that $\vec{c}_{1}\left(t\right)$satisfies an equation of the form
\begin{align}
i\frac{d}{dt}\vec{c}_{1}\left(t\right) & =\beta e^{iA_{0}t}F\left(\vec{c}\right)e^{-iA_{0}t}\vec{c}_{1}\left(t\right)\\
= & \beta e^{iA_{0}t}F\left(\vec{c}_{1}\right)e^{-iA_{0}t}\vec{c}_{1}\left(t\right)+.....
\end{align}
In general the frequencies of $e^{iA_{0}t}$ combine with the frequencies
of $\vec{F}\left(\vec{c}_{1}\left(t\right)\right)$ and generate many
new frequencies .

For $\vec{F}$ depending polynomially on $\vec{c}\left(t\right)$,
they are of the general form
\begin{equation}
\omega_{M}=\sum_{i=1}^{M}n_{i}E_{i},\,\,\, n_{i}\,\, integers.
\end{equation}
 Such sums can add to zero (secular terms ) or add to a very small
number .Formally integrating the equation with such a term leads to
expression of the type
\begin{equation}
\sim\frac{e^{i\omega_{M}t}-1}{\omega_{M}}
\end{equation}
which is the small divisor problem , when $\omega_{M}$ is zero or
small.

It is clear that in these kind of iterations scheme , each linear
approximation is of the type $\left(4.1\right)$.

In order to apply our method to the linear system we have to split
$A\left(t\right)$ of (\ref{eq:4.1}) as
\begin{equation}
A=A^{>}+A^{<},\label{eq:4.2}
\end{equation}
 and to treat separately the two parts. For this purpose we introduce
the following two Lemmas. \begin{lem} \label{lem:2}Let
\begin{equation}
A^{>}=\sum_{j=j_{0}}^{\infty}(M_{j}e^{i\omega_{j}t}+h.c)
\end{equation}
 with $j_{0}$ sufficiently large, so that
\begin{equation}
\omega_{j}T_{0}\ll1,\label{eq:4.4}
\end{equation}
 with $T_{0}=\beta^{-1/2}$ and $\beta$ small, and
\begin{equation}
\|M_{j}\|T_{0}\leq\frac{1}{j^{1+\delta}}\qquad\text{,\qquad}\delta>0,\label{eq:4.13-1}
\end{equation}
 for all $j\geq j_{0}$. Then,
\begin{equation}
\left|\int_{0}^{t}(A^{>}(t^{\prime})-\bar{A}_{0}^{>g}(t^{\prime}))dt^{\prime}\right|\leq O(1)
\end{equation}
 \end{lem}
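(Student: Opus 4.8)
The plan is to read off the estimate as essentially a one-line consequence of Lemma~\ref{lem:1}, applied to $A^{>}$ in place of $A$, combined with the summability of the tail weights (\ref{eq:4.13-1}). It is worth noting at the outset that the small-divisor condition (\ref{eq:4.4}) is not actually needed for the $O(1)$ bound itself; it enters only if one wants the sharper estimate sketched below (and in the complementary estimate for $A^{<}$).

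First I would check that the objects in the statement make sense. Since $\|M_j\|\le T_0^{-1}j^{-1-\delta}$ with $\delta>0$, the series $\sum_{j\ge j_0}\bigl(M_je^{i\omega_jt}+M_j^{\dagger}e^{-i\omega_jt}\bigr)$ converges in operator norm uniformly in $t$, so $A^{>}(t)$ is a well-defined bounded matrix with
\[
|||A^{>}|||_{t}=\sup_{0\le t'\le t}\Bigl\|\sum_{j\ge j_0}\bigl(M_je^{i\omega_jt'}+M_j^{\dagger}e^{-i\omega_jt'}\bigr)\Bigr\|\le 2\sum_{j\ge j_0}\|M_j\| ,
\]
and $\bar{A}_0^{>g}$ is given by (\ref{eq:2.5})--(\ref{eq:2.6}), the $j$-sum and the integral being interchangeable by the uniform convergence just noted. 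The proof of Lemma~\ref{lem:1} uses nothing about $A$ beyond boundedness and the vanishing of $\int_{kT_0}^{(k+1)T_0}\bigl(A-\bar{A}_0^{g}\bigr)dt'$ over each complete $T_0$-interval, see~(\ref{eq:2.21}); running it verbatim with $A^{>}$ therefore gives
\[
\Bigl\|\int_0^t\bigl(A^{>}(t')-\bar{A}_0^{>g}(t')\bigr)dt'\Bigr\|\le 2\,|||A^{>}|||_{t}\,T_0\le 4\,T_0\sum_{j\ge j_0}\|M_j\|=4\sum_{j\ge j_0}\|M_j\|\,T_0\le 4\sum_{j\ge j_0}j^{-1-\delta},
\]
and the right-hand side is a finite constant since $\delta>0$. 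This is exactly the claim $\le O(1)$.

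For completeness I would also record the term-by-term picture, which exhibits the mechanism the paper is after and in fact yields a bound that is small, not merely $O(1)$. With $n=[t/T_0]$ and $\tau=t-nT_0\in[0,T_0)$, the cancellation over complete intervals leaves only the last fractional window, and a short computation gives
\[
\int_0^t\bigl(A^{>}-\bar{A}_0^{>g}\bigr)dt'=\sum_{j\ge j_0}\Bigl(e^{i\omega_jnT_0}M_jD_j(\tau)+\text{h.c.}\Bigr),\qquad D_j(\tau)=\int_0^{\tau}e^{i\omega_ju}\,du-\frac{\tau}{T_0}\int_0^{T_0}e^{i\omega_ju}\,du .
\]
Subtracting the constant part ($e^{i\omega_ju}\mapsto 1$), which contributes exactly $0$, and using $|e^{i\omega_ju}-1|\le|\omega_j|u$, one gets both $|D_j(\tau)|\le|\omega_j|T_0^2$ and the trivial $|D_j(\tau)|\le 2T_0$; under (\ref{eq:4.4}) the first bound wins, and
\[
\Bigl\|\int_0^t\bigl(A^{>}-\bar{A}_0^{>g}\bigr)dt'\Bigr\|\le 2\sum_{j\ge j_0}\|M_j\|\,|\omega_j|\,T_0^{2}\le 2\Bigl(\sup_{j\ge j_0}|\omega_j|T_0\Bigr)\sum_{j\ge j_0}\|M_j\|\,T_0\ll 1,
\]
i.e., after one averaging step the slow-frequency tail behaves like an already-removed secular term.

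I do not expect a genuine obstacle here. The only point demanding care is the interchange of the infinite $j$-sum with the integrations defining $\bar{A}_0^{>g}$ and with $\int_0^t$, which is routine given $\sum_{j\ge j_0}\|M_j\|<\infty$; the substance of the lemma is simply that each slow mode contributes a boundary-type term of size $O(\|M_j\|T_0)$ and the weights $\|M_j\|T_0\le j^{-1-\delta}$ are summable.
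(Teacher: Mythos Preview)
Your proposal is correct. Your primary argument---directly invoking Lemma~\ref{lem:1} with $A^{>}$ in place of $A$ and then bounding $|||A^{>}|||_t\le 2\sum_{j\ge j_0}\|M_j\|$---is a cleaner shortcut than what the paper does. The paper instead carries out the term-by-term computation explicitly: it fixes a single mode $B(t)=M_je^{i\omega_jt}$, evaluates $\bar{B}_0^{(n)}$ and the last-window remainder in closed form, bounds that remainder by $2\|M_j\|T_0$, and then sums over $j$. Your ``for completeness'' paragraph is essentially this same computation, and you push it one step further to the sharper per-mode bound $\|M_j\||\omega_j|T_0^{2}$ under~(\ref{eq:4.4}). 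You are also right that~(\ref{eq:4.4}) is not actually required for the bare $O(1)$ conclusion: the paper mentions it, but its final per-mode estimate $2\|M_j\|T_0$ follows from $|e^{ix}-1|\le|x|$ alone; the smallness condition only buys the sharper bound and is what makes the complementary decomposition with $A^{<}$ work.
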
 \begin{proof} Start from ,
\begin{equation}
A^{>}=\sum_{j=j_{0}}^{\infty}M_{j}e^{i\omega_{j}t}+h.c,
\end{equation}
 and choose a term and denote it
\begin{equation}
B(t)=M_{j}e^{i\omega_{j}t}
\end{equation}

\begin{equation}
\bar{B}_{0}^{(0)}=\frac{1}{T_{0}}\int_{0}^{T_{0}}B(t)dt=\frac{e^{i\omega_{j}T_{0}}-1}{i\omega_{j}T_{0}}M_{j}
\end{equation}
 and
\begin{eqnarray}
\bar{B}_{0}^{(n)} & = & \frac{M_{j}}{T_{0}}\int_{nT_{0}}^{(n+1)T_{0}}e^{i\omega_{j}t}dt=M_{j}\frac{e^{i(n+1)T_{0}\omega_{j}}-e^{inT_{0}\omega_{j}}}{i\omega_{j}T_{0}}\nonumber \\
 & = & M_{j}e^{inT_{0}\omega_{j}}\left(\frac{e^{iT_{0}\omega_{j}}-1}{i\omega_{j}T_{0}}\right).\label{eq:4.10}
\end{eqnarray}
 Define
\begin{equation}
\bar{B}^{g}(t)=\bar{B}_{0}^{(n)}
\end{equation}
 for
\begin{equation}
nT_{0}\leq t<(n+1)T_{0}.
\end{equation}
 Then,
\begin{eqnarray*}
\int_{0}^{t}dt^{\prime}\left[B(t^{\prime})-\bar{B}_{0}^{g}(t^{\prime})\right] & = & \int_{0}^{T_{0}}\left[B(t^{\prime})-\bar{B}_{0}^{\left(0\right)}(t^{\prime})\right]dt^{\prime}+\cdots+\int_{jT_{0}}^{(j+1)T_{0}}\left[B(t^{\prime})-\bar{B}^{(j)}(t^{\prime})\right]dt^{\prime}\\
+\cdots+\int_{nT_{0}}^{t}\left[B(t^{\prime})-\bar{B}^{\left(n\right)}\left(t^{\prime}\right)\right]dt^{\prime} & = & \int_{nT_{0}}^{t}\left[B(t^{\prime})-\bar{B}_{0}^{g}\left(t^{\prime}\right)\right]dt^{\prime}
\end{eqnarray*}
\begin{equation}
\label{eq:4.13}
\end{equation}
 where
\begin{equation}
nT_{0}\leq t<(n+1)T_{0}.
\end{equation}
 Since,
\begin{equation}
\frac{1}{T_{0}}\int_{jT_{0}}^{(j+1)T_{0}}B(t')dt'=\bar{B}_{0}^{(j)}.
\end{equation}
 one finds,
\begin{equation}
\int_{jT_{0}}^{(j+1)T_{0}}\left[B(t^{\prime})-\bar{B}_{0}^{g}(t^{\prime})\right]dt'=0.
\end{equation}
 Using (\ref{eq:4.10}) and (\ref{eq:4.13-1}) one finds,
\begin{eqnarray}
\int_{0}^{t}dt^{\prime}\left[B(t^{\prime})-\bar{B}_{0}^{g}(t^{\prime})\right] & = & M_{j}\frac{e^{i\omega_{j}t}-e^{i\omega_{j}nT_{0}}}{i\omega_{j}}-M_{j}\frac{\left(t-nT_{0}\right)e^{inT_{0}\omega_{j}}\left(e^{i\omega_{j}T_{0}}-1\right)}{i\omega_{j}T_{0}}\nonumber \\
 & = & \frac{M_{j}}{i\omega_{j}}e^{i\omega_{j}nT_{0}}\left[\left(e^{i\omega_{j}(t-nT_{0})}-1\right)-\left(t-nT_{0}\right)\left(\frac{e^{i\omega_{j}T_{0}}-1}{T_{0}}\right)\right].
\end{eqnarray}
 For,
\begin{equation}
\omega_{j}T_{0}\ll1,
\end{equation}
 where $T_{0}=\beta^{-1/2}$. Using the fact that $\left|t-nT_{0}\right|<T_{0}$
we get:
\begin{equation}
\left|\int_{0}^{t}\left[B(t^{\prime})-\bar{B}_{0}^{g}(t^{\prime})\right]dt^{\prime}\right|\leq||M_{j}||\cdot2T_{0}.
\end{equation}
 Assuming (\ref{eq:4.13-1}) ,
\begin{equation}
\left\Vert M_{j}\right\Vert T_{0}\leq\frac{1}{j^{i+\delta}},
\end{equation}
 and summing the various contributions in the sum for $A^{>}$, one
finds,
\begin{eqnarray}
\left|\int_{0}^{t}\left[A^{>}(t^{\prime})-\bar{A}_{0}^{>g}(t^{\prime})\right]dt^{\prime})\right| & \leq & \sum_{j=j_{0}}^{\infty}2||M_{j}||T_{0}\leq\text{const}<\infty
\end{eqnarray}
 or
\begin{equation}
\left|\int_{0}^{t}dt'\left(A^{>}(t')-\bar{A}_{0}^{>g}(t')\right)\right|\leq\ensuremath{O(1)}.\label{eq:4.21}
\end{equation}
 \end{proof} \begin{rem} Note the difference between (\ref{eq:4.21})
and (\ref{eq:2.50}). This is due to the fact that we can use the
special form of $A\left(t\right)$, to separate the low frequency
terms from the ``order 1'' frequency terms. \end{rem} \begin{lem}
\label{lem:3}Let
\begin{equation}
A^{<}=\sum_{j=1}^{j_{0}-1}M_{j}e^{i\omega_{j}t}+c.c
\end{equation}
 and let
\begin{equation}
\frac{\|M_{j}\|}{\omega_{j}}\leq\frac{\mathrm{const}}{j^{1+\delta}},\,\,\,\,\,1\leq j<j_{0}
\end{equation}
 and $T_{0}=\beta^{-1/2}$. Then,
\begin{equation}
\left\Vert \frac{1}{T_{0}}\int_{0}^{t}A^{<}(s)ds\right\Vert \leq\mathrm{const}\beta^{1/2}.
\end{equation}
 \end{lem}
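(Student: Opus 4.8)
The plan is to compute $\int_{0}^{t}A^{<}(s)\,ds$ explicitly, term by term, and to observe that the hypothesis $\|M_{j}\|/\omega_{j}\le\mathrm{const}/j^{1+\delta}$ is exactly what is needed to bound this antiderivative by a constant which is independent of $t$ \emph{and} of the cutoff $j_{0}$. Concretely, I would first write $A^{<}(s)=\sum_{j=1}^{j_{0}-1}\left(M_{j}e^{i\omega_{j}s}+M_{j}^{\dagger}e^{-i\omega_{j}s}\right)$ and integrate each exponential:
\[
\int_{0}^{t}A^{<}(s)\,ds=\sum_{j=1}^{j_{0}-1}\left(M_{j}\,\frac{e^{i\omega_{j}t}-1}{i\omega_{j}}-M_{j}^{\dagger}\,\frac{e^{-i\omega_{j}t}-1}{i\omega_{j}}\right).
\]
Since $|e^{i\theta}-1|\le 2$ for real $\theta$, each summand (together with its hermitian-conjugate partner) is bounded in norm by $4\|M_{j}\|/\omega_{j}$, so that
\[
\left\Vert\int_{0}^{t}A^{<}(s)\,ds\right\Vert\le\sum_{j=1}^{j_{0}-1}\frac{4\|M_{j}\|}{\omega_{j}}\le 4\,\mathrm{const}\sum_{j=1}^{\infty}j^{-1-\delta}=:C,
\]
and $C<\infty$ because $\delta>0$. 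Dividing by $T_{0}=\beta^{-1/2}$ then yields $\left\Vert\frac{1}{T_{0}}\int_{0}^{t}A^{<}(s)\,ds\right\Vert\le C\beta^{1/2}$, which is the assertion.

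There is essentially no analytic obstacle here: the small-divisor factor $1/\omega_{j}$ that would ordinarily be dangerous has, for these finitely many ``order one'' frequencies, already been absorbed into the hypothesis on $M_{j}$. The only point requiring a moment's care is to make sure the estimate holds with a constant that does not deteriorate as $j_{0}\to\infty$, so that this lemma together with Lemma~\ref{lem:2} covers the full matrix $A$ of~(\ref{eq:4.1}) with a single choice of constants; this is precisely what the summability of $j^{-1-\delta}$ provides. It is also worth noting that, in contrast to the treatment of $A^{>}$ in Lemma~\ref{lem:2}, here we do not subtract the piecewise average before integrating — the gain is simply the explicit normalization factor $1/T_{0}=\beta^{1/2}$, and the role of the decomposition $A=A^{>}+A^{<}$ is to route the two regimes of frequencies through the two different mechanisms (cancellation against the block averages for $A^{>}$, explicit integrability for $A^{<}$).
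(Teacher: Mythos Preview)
Your proof is correct and follows essentially the same route as the paper: integrate each exponential explicitly, use $|e^{i\omega_{j}t}-1|\le 2$ to bound each term by a multiple of $\|M_{j}\|/\omega_{j}$, sum using the hypothesis, and divide by $T_{0}$. You are in fact slightly more careful than the paper in explicitly tracking the hermitian-conjugate terms and in noting that the constant is independent of $j_{0}$ via the summability of $j^{-1-\delta}$.
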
 \begin{proof}
\begin{equation}
A^{<}=\sum_{j=1}^{j_{0}-1}M_{j}e^{i\omega_{j}t},
\end{equation}
 we note that,
\begin{equation}
\int_{0}^{t}e^{i\omega_{j}t}=\frac{1}{i\omega_{j}}\left(e^{i\omega_{j}t}-1\right)
\end{equation}
 therefore,
\begin{equation}
\left|\int_{0}^{t}e^{i\omega_{j}t}\right|\leq\frac{2}{\omega_{j}}
\end{equation}
 Therefore, the relevant sum is bounded as,
\begin{equation}
\left|\int_{0}^{t}A^{>}\left(t'\right)dt'\right|\leq\sum_{j=1}^{j_{0}-1}\frac{2\left\Vert M_{j}\right\Vert }{\omega_{j}}\leq\text{const}.
\end{equation}
 Therefore,
\begin{equation}
\frac{1}{T_{0}}\left|\int_{0}^{t}A^{>}\left(t'\right)dt'\right|<\mathrm{const}\beta^{1/2}.
\end{equation}
 \end{proof} \begin{rem} The partition (\ref{eq:4.2}) is always
possible since if (\ref{eq:4.4}) is not satisfied then $\omega_{j}T_{0}$
larger then some constant that is much smaller than unity. Therefore,
all terms that do not belong to $A^{>}$ belong to $A^{<}$ with the
appropriate choice of the constant. This decomposition depends on
the choice of $T_{0}$. \end{rem} \begin{thm} The global average
$\bar{A}_{2}^{g}$ defined by (\ref{eq:2.52}) is bounded by
\begin{equation}
\left|\bar{A}_{2}^{g}\right|\leq O\left(\beta^{2}\right).
\end{equation}
 \end{thm}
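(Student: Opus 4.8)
The plan is to re-run the iterative estimates of Section~\ref{sec:Averaging} for the quasiperiodic matrix $A$ of (\ref{eq:4.1}), replacing the generic ``inner integral'' bound $\left\|\int_{0}^{t}\bigl(A_{n}-\bar{A}_{n}^{g}\bigr)\right\|=O(\beta^{-1/2})$ of Proposition~\ref{prop:2}(3) by the far stronger $O(1)$ bound that the high/low frequency splitting makes available. First I would record this improved bound at level $0$. Split $A=A^{>}+A^{<}$ as in (\ref{eq:4.2}); since the block average is linear, $\bar{A}_{0}^{g}=\bar{A}_{0}^{>g}+\bar{A}_{0}^{<g}$, so it is enough to bound the two pieces of $\int_{0}^{t}\bigl(A-\bar{A}_{0}^{g}\bigr)$ separately. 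For the low-frequency part this is precisely Lemma~\ref{lem:2}: $\bigl\|\int_{0}^{t}(A^{>}-\bar{A}_{0}^{>g})\bigr\|=O(1)$. For the high-frequency part, Lemma~\ref{lem:3} gives $\bigl\|T_{0}^{-1}\int_{0}^{t}A^{<}\bigr\|=O(\beta^{1/2})$, hence $\bigl\|\int_{0}^{t}A^{<}\bigr\|=O(T_{0}\beta^{1/2})=O(1)$ and, taking the difference over one block, $\bar{A}_{0}^{<(n)}=T_{0}^{-1}\int_{nT_{0}}^{(n+1)T_{0}}A^{<}=O(\beta^{1/2})$; writing $\int_{0}^{t}\bar{A}_{0}^{<g}=\int_{0}^{nT_{0}}A^{<}+(t-nT_{0})\bar{A}_{0}^{<(n)}$ with $n=[t/T_{0}]$ then gives $\bigl\|\int_{0}^{t}(A^{<}-\bar{A}_{0}^{<g})\bigr\|=O(1)$ as well. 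Adding the two pieces, $\|G_{0}\|:=\sup_{t}\bigl\|\int_{0}^{t}(A-\bar{A}_{0}^{g})\bigr\|=O(1)$, while trivially $g_{0}:=\sup_{t}\|\bar{A}_{0}^{g}\|\le\sup_{t}\|A\|=O(1)$.

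Next I would propagate these bounds through the hierarchy by exactly the integration-by-parts computation used in (\ref{eq:2.34}). Write $G_{n}(t)=\int_{0}^{t}(A_{n}-\bar{A}_{n}^{g})$, $g_{n}=\sup_{t}\|\bar{A}_{n}^{g}\|$, $\|G_{n}\|=\sup_{t}\|G_{n}(t)\|$, and apply (\ref{eq:2.27}) to $\bar{A}_{n+1}^{(m)}=T_{0}^{-1}\int_{mT_{0}}^{(m+1)T_{0}}U_{n}^{-1}\bigl[A_{n}-\bar{A}_{n}^{g}\bigr]U_{n}\,dt$. The boundary term is $T_{0}^{-1}\bigl[U_{n}^{-1}(t)G_{n}(t)U_{n}(t)\bigr]_{mT_{0}}^{(m+1)T_{0}}$, and it vanishes because $G_{n}(kT_{0})=0$ for every integer $k$: indeed $\int_{kT_{0}}^{(k+1)T_{0}}(A_{n}-\bar{A}_{n}^{g})=\int_{kT_{0}}^{(k+1)T_{0}}A_{n}-T_{0}\bar{A}_{n}^{(k)}=0$ by the definition of the block average. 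The two surviving terms each carry one factor $dU_{n}^{\pm1}/dt=O(\beta g_{n})$ (from (\ref{eq:2.39})--(\ref{eq:2.40})) and one factor $\|G_{n}\|$, so dividing by $T_{0}$ gives
\[
g_{n+1}\le C\beta\,g_{n}\,\|G_{n}\|.
\]
The same manipulation applied to $\int_{mT_{0}}^{t}A_{n+1}$ --- where now the boundary term survives but is $O(\|G_{n}\|)$ since $G_{n}(mT_{0})=0$ --- together with $\|(t-mT_{0})\bar{A}_{n+1}^{(m)}\|\le T_{0}g_{n+1}$, yields the companion estimate $\|G_{n+1}\|\le\|G_{n}\|\bigl(1+C\beta^{1/2}g_{n}\bigr)$; here I also use $\sup_{t}\|A_{n}\|=O(1)$ from Proposition~\ref{prop:2}(4).

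Finally I would iterate the two recursions from the level-$0$ data $g_{0}=O(1)$, $\|G_{0}\|=O(1)$. Since $g_{n}\to0$ the product $\prod_{k}\bigl(1+C\beta^{1/2}g_{k}\bigr)$ converges, so $\|G_{n}\|=O(\|G_{0}\|)=O(1)$ for all $n$; feeding this back into $g_{n+1}\le C\beta g_{n}\|G_{n}\|$ gives $g_{n+1}=O(\beta)\,g_{n}$, hence $g_{n}=O(\beta^{n})$ and in particular $\bar{A}_{2}^{g}=g_{2}=O(\beta^{2})$, the claimed bound. (With the sharper level-$0$ input $g_{0}=O(\beta^{1/2})$, itself a consequence of Lemmas~\ref{lem:2}--\ref{lem:3} and of $\bar{A}_{0}^{>(n)}=O(\beta^{1/2})$, the same recursion even yields $O(\beta^{5/2})$, but $O(\beta^{2})$ is all that is stated.) All the conceptual content sits in the $O(1)$ bound on $\|G_{0}\|$: separating the low-frequency, small-divisor-prone terms from the order-one ones is what upgrades the generic gain of one factor $\beta^{1/2}$ per iteration into a gain of a full factor $\beta$, so that two iterations already produce $\beta^{2}$ instead of $\beta$. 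Beyond that, the only thing needing care is verifying that $\|G_{n}\|$ truly stays $O(1)$ --- that is, that the $U_{n}$-conjugation and the block structure do not quietly reintroduce a $\beta^{-1/2}$ --- which is exactly what the companion recursion $\|G_{n+1}\|\le\|G_{n}\|(1+C\beta^{1/2}g_{n})$ guarantees; everything else is a verbatim repetition of the integration-by-parts estimates already carried out for $\bar{A}_{1}^{g}$ and $\bar{A}_{2}^{g}$ in Section~\ref{sec:Averaging}.
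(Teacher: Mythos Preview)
Your argument is correct and follows the same route as the paper: use Lemmas~\ref{lem:2}--\ref{lem:3} to upgrade the generic bound $\bigl\|\int_{0}^{t}(A-\bar{A}_{0}^{g})\bigr\|=O(\beta^{-1/2})$ of Proposition~\ref{prop:2}(3) to $O(1)$, and then rerun the integration-by-parts computation (\ref{eq:2.34}) at levels $n=0,1$ to obtain $|\bar{A}_{1}^{g}|=O(\beta)$ and then $|\bar{A}_{2}^{g}|=O(\beta^{2})$. The substantive difference is that you package the two levels into the recursion $g_{n+1}\le C\beta g_{n}\|G_{n}\|$ together with the companion bound $\|G_{n+1}\|\le\|G_{n}\|(1+C\beta^{1/2}g_{n})$, whereas the paper treats the two levels by hand. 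Your version is in fact more complete: at the second level the paper invokes only the generic estimate (\ref{eq:4.39}), i.e.\ $\bigl\|\int_{0}^{s}(A_{1}-\bar{A}_{1}^{g})\bigr\|=O(\beta^{-1/2})$, which combined with $|dU_{1}/dt|=O(\beta^{2})$ would yield only $O(\beta^{3/2})$; it is precisely your companion recursion, giving $\|G_{1}\|=O(1)$, that delivers the extra factor of $\beta^{1/2}$ needed for the stated $O(\beta^{2})$.

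One small correction: your parenthetical claim that $g_{0}=\sup_{t}\|\bar{A}_{0}^{g}\|=O(\beta^{1/2})$ is not supported by the hypotheses. Only the low-frequency block average $\bar{A}_{0}^{>g}$ is $O(\beta^{1/2})$ (via $\|M_{j}\|T_{0}\le j^{-1-\delta}$); the high-frequency piece $\bar{A}_{0}^{<g}$ is generically $O(1)$, since Lemma~\ref{lem:3} controls only $\int_{0}^{t}A^{<}$ and not $A^{<}$ itself. This does not affect your main argument, which uses only $g_{0}=O(1)$, but the claimed sharpening to $O(\beta^{5/2})$ does not follow.
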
 \begin{proof} First we estimate $\bar{A}_{1}^{g}$. For
this we define
\begin{eqnarray}
\bar{A}_{1}^{\left(0\right)} & = & \frac{1}{T_{0}}\int_{0}^{T_{0}}U_{0}^{-1}\left[A-\bar{A}{}_{0}^{g}\right]U_{0}\\
 & = & \frac{1}{T_{0}}\left[U_{0}^{-1}(t)\left(\int_{0}^{t}\left[A(s)-\bar{A}_{1}^{g}(s)\right]ds\right)U_{0}(t)\right]_{t=0}^{t=T_{0}}\nonumber \\
 & - & \frac{1}{T_{0}}\int_{0}^{T_{0}}\frac{dU_{0}^{-1}(t)}{dt}\left(\int_{0}^{t}\left[A(s)-\bar{A}_{0}^{\left(0\right)}\right]ds\, U_{0}(t)\right)dt\nonumber \\
 & - & \frac{1}{T_{0}}\int_{0}^{T_{0}}\int U_{0}^{-1}\left(\int_{0}^{t}\left[A(s)-\bar{A}_{0}^{\left(0\right)}\right]ds\right)\frac{dU_{0}}{dt}dt\nonumber
\end{eqnarray}
 From the definition of $\bar{A}_{0}^{\left(0\right)}(s)$,
\begin{equation}
\int_{0}^{T_{0}}ds\left(A(s)-\bar{A}_{0}^{\left(0\right)}\right)=0,\label{eq:4.33}
\end{equation}
 The contribution from the regime where Lemma \ref{lem:2} holds is,
\begin{equation}
\left|\int_{0}^{t}ds\left[A(s)-\bar{A}_{0}^{g}(s)\right]\right|\leq O(1)\label{eq:4.34}
\end{equation}

by (\ref{eq:4.21}). The contribution from the region where Lemma
\ref{lem:3} is relevant, is
\begin{equation}
\sum_{j=0}^{j_{0}-1}\frac{\left\Vert M_{j}\right\Vert }{\omega_{j}}\left[e^{i\omega_{j}\left(t-nT_{0}\right)}-e^{inT_{0}}-\frac{\left(t-nT_{0}\right)}{T_{0}}\left(e^{i\left(n+1\right)T_{0}\omega_{j}}-e^{inT_{0}\omega_{j}}\right)\right]\leq4\sum_{j=0}^{j_{0}-1}\frac{\left\Vert M_{j}\right\Vert }{\omega_{j}},
\end{equation}
 and an equality similar to (\ref{eq:4.34}) holds. From the general
theory (\ref{eq:2.10}) and (\ref{eq:2.11}),
\begin{equation}
\left|\frac{dU_{0}}{dt}\right|\leq O(\beta)\text{,}\qquad\left|\frac{dU_{0}^{-1}}{dt}\right|\leq O(\beta)
\end{equation}
 Therefore, combined with (\ref{eq:2.36-1})
\begin{equation}
|\bar{A}_{1}^{g}|\leq O(\beta)
\end{equation}
 by (\ref{eq:2.45}) and (\ref{eq:2.53a})
\begin{eqnarray}
\bar{A}_{2}^{\left(0\right)} & = & \frac{1}{T_{0}}\int_{0}^{T_{0}}U_{1}^{\prime-1}(t^{\prime})\left[A_{1}(t^{\prime})-\bar{A}_{1}^{g}(t^{\prime})\right]U_{1}(t^{\prime})dt^{\prime}\nonumber \\
 & = & \frac{1}{T_{0}}U_{1}^{-1}(t)\left(\int_{0}^{T_{0}}\left[A_{1}(s)-\bar{A}_{1}^{g}\right]ds\right)U_{1}(t)\nonumber \\
 & - & \frac{1}{T_{0}}\int_{0}^{T_{0}}\frac{dU_{1}^{-1}}{ds}\left(\int_{0}^{s}\left[A_{1}(s^{\prime})-\bar{A}_{1}^{g}(s^{\prime})\right]ds^{\prime}\right)U_{1}(s)ds\nonumber \\
 & - & \frac{1}{T_{0}}\int_{0}^{T_{0}}U_{1}^{-1}\left(\int_{0}^{s}\left[A_{1}(s^{\prime})-\bar{A}_{1}^{g}(s^{\prime})\right]ds^{\prime}\right)\frac{d}{ds}U_{1}(s)ds.\label{eq:4.38}
\end{eqnarray}
 By (\ref{eq:2.50}) of the general theory,
\begin{equation}
\frac{1}{T_{0}}\left|\int_{0}^{T_{0}}\left[A_{1}(s)-\bar{A}_{1}^{g}(s)\right]ds\right|\leq O\left(1\right).\label{eq:4.39}
\end{equation}
 From the definition of $\bar{A}_{1}^{g}$ one finds $\int_{0}^{T_{0}}\left(A_{1}\left(s\right)-A_{1}^{g}\left(s\right)\right)ds=0$.
Using (\ref{eq:4.38}) combined with (\ref{eq:2.39}) and (\ref{eq:2.40})
we find
\begin{equation}
\left|\frac{dU_{1}}{dt}\right|\leq O\left(\beta^{2}\right)\qquad\left|\frac{dU_{1}^{-1}}{dt}\right|\leq O\left(\beta^{2}\right),
\end{equation}
 then (\ref{eq:4.38}) combined with (\ref{eq:4.39}) and (\ref{eq:2.40})
leads to the bound,
\begin{equation}
\left|\bar{A}_{2}^{\left(0\right)}\right|\leq O\left(\beta^{2}\right).\label{eq:4.40}
\end{equation}
The same bound holds for all $\bar{A_{2}}^{\left(n\right)}$. But
$\bar{A}_{2}^{g}$ in each interval of length $T_{0}$ is equal to
one of the $\bar{A}_{2}^{\left(n\right)}$, each satisfying (\ref{eq:4.40}),
therefore it also satisfies (\ref{eq:4.40}), therefore
\begin{equation}
\left|\bar{A}_{2}^{g}\right|\leq O\left(\beta^{2}\right).
\end{equation}
\end{proof}

\begin{rem} This bound is better than the one of the general theory
(\ref{eq:2.52}). \end{rem} Part of this work was done at the Max-Planck
Institut in Dresden. A.S. thanks the Institute for partial support.
The work of A.S. is partially supported by NSF grant DMS-1201394.
This work was partly supported also by the Israel Science Foundation
(ISF) and by the US-Israel Binational Science Foundation (BSF), by
the Minerva Center of Nonlinear Physics of Complex Systems, by the
Shlomo Kaplansky academic chair and by the Fund for promotion of research
at the Technion.


\end{document}